
\pdfoutput=1  
\documentclass[11pt]{article}
\usepackage{url}
\usepackage{epsf}
\usepackage{graphicx}
\usepackage{amsmath}
\usepackage{fullpage}


\newtheorem{lemma}{Lemma}
\newtheorem{theorem}{Theorem}

\newcommand{\qed}{\rule{0.5em}{1.5ex}}
\newcommand{\fqed}{{\hfill~\qed}}
\newenvironment{proof}{{\noindent \bf Proof.}}
                      {{\hfill \fqed} \vspace{1em}}


%

\newtheorem{proposition}{Proposition}

\newcommand{\Pa}{{\mathcal P}}

{\makeatletter
 \gdef\xxxmark{%
   \expandafter\ifx\csname @mpargs\endcsname\relax 
     \expandafter\ifx\csname @captype\endcsname\relax 
       \marginpar{xxx}
     \else
       xxx 
     \fi
   \else
     xxx 
   \fi}
 \gdef\xxx{\@ifnextchar[\xxx@lab\xxx@nolab}
 \long\gdef\xxx@lab[#1]#2{{\bf [\xxxmark #2 ---{\sc #1}]}}
 \long\gdef\xxx@nolab#1{{\bf [\xxxmark #1]}}
 \gdef\turnoffxxx{\long\gdef\xxx@lab[##1]##2{}\long\gdef\xxx@nolab##1{}}%
}

\title{$\pi/2$-Angle Yao Graphs are Spanners}
\author{
Prosenjit Bose
    \thanks{School of Computer Science, Carleton University, Ottawa, Canada. \protect\url{jit@scs.carleton.ca}.
    Supported by NSERC.}
\and
Mirela Damian
    \thanks{Department of Computer Science, Villanova University, Villanova, USA.
    \protect\url{mirela.damian@villanova.edu}. Supported by NSF grant CCF-0728909.}
\and
Karim Dou\"ieb
    \thanks{School of Computer Science, Carleton University, Ottawa, Canada. \protect\url{kdouieb@ulb.ac.be}.
    Supported by NSERC.}
\and
Joseph O'Rourke
    \thanks{Department of Computer Science, Smith College, Northampton, USA. \protect\url{orourke@cs.smith.edu.}
    Supported by NSERC.}
\and
Ben Seamone
    \thanks{School of Mathematics and Statistics, Carleton University, Ottawa, Canada. \protect\url{bseamone@connect.carleton.ca}.}
\and
Michiel Smid
    \thanks{School of Computer Science, Carleton University, Ottawa, Canada. \protect\url{michiel@scs.carleton.ca.}
    Supported by NSERC.}
\and
Stefanie Wuhrer
    \thanks{NRC Institute for Information Technology, Ottawa, Canada. \protect\url{Stefanie.Wuhrer@nrc-cnrc.gc.ca}.}
}
\date{}

\begin{document}

\maketitle

\begin{abstract}
We show that the Yao graph $Y_4$ in the $L_2$ metric is a spanner with
stretch factor $8(29+23\sqrt{2})$. Enroute to this, we also show that
the Yao graph $Y^\infty_4$ in the $L_\infty$ metric is a planar spanner
with stretch factor $8$.
\end{abstract}

\section{Introduction}
Let $V$ be a finite set of points in the plane and let $G=(V,E)$ be
the complete Euclidean graph on $V$. We will refer to the points in
$V$ as \emph{nodes}, to distinguish them from other points in the
plane.
The \emph{Yao graph}~\cite{Yao82} with an integer parameter $k > 0$,
denoted $Y_k$, is defined as follows. At each node $u \in V$, any $k$
equally-separated rays originating at $u$ define $k$ cones.
In each cone, pick a shortest edge $uv$, if there is one, and add to $Y_k$
the directed edge $\overrightarrow{uv}$. Ties are broken arbitrarily.
Most of the time we ignore the direction of an edge $uv$; we refer
to the directed version $\overrightarrow{uv}$ of $uv$ only when its
origin ($u$) is important and unclear from the context.
We will distinguish between $Y_k$, the Yao graph in the Euclidean $L_2$
metric, and $Y^\infty_k$, the Yao graph in the $L_\infty$ metric. Unlike
$Y_k$ however, in constructing $Y^\infty_k$ ties are broken by always
selecting the most counterclockwise edge; the reason for this choice
will become clear in Section~\ref{sec:y4inf}.

For a given subgraph $H \subseteq G$ and a fixed $t \ge 1$, $H$ is
called a $t$-\emph{spanner} for $G$ if, for any two nodes
$u,v \in V$, the shortest path in $H$ from $u$ to $v$ is no longer than
$t$ times the length of $uv$.
The value $t$ is called the \emph{dilation} or the
\emph{stretch factor} of $H$. If $t$ is constant, then $H$ is called a
\emph{length spanner}, or simply a \emph{spanner}.

The class of graphs $Y_k$ has been much studied. Bose et al.~\cite{bmnsz-agbsp-03}
showed that, for $k \geq 9$, $Y_k$ is a spanner with stretch factor
$\frac{1}{\cos \frac{2\pi}{k} - \sin \frac{2\pi}{k}}$. 
In the appendix, we improve the stretch factor and show that, in fact, 
$Y_k$ is a spanner for any $k \geq 7$.
Recently, Molla~\cite{M09} showed that $Y_2$ and $Y_3$ are 
not spanners, and that $Y_4$ is a spanner with stretch factor 
$4(2+\sqrt{2})$, for the special case when the nodes in $V$ are 
in convex position (see also ~\cite{DMP09}). 
The authors conjectured that $Y_4$ is a spanner for arbitrary point 
sets.
In this paper, we settle their conjecture and prove that $Y_4$ is a
spanner with stretch factor $8(29+23\sqrt{2})$.

The paper is organized as follows.
In Section~\ref{sec:y4inf}, we prove that the graph $Y^\infty_4$ is a
spanner with stretch factor $8$.
In Section~\ref{secY4L2}, we prove, in a sequence of Lemmas, several
properties for the graph $Y_4$. Finally, in Section~\ref{sec:Yinf+Y4},
we use the properties of Section~\ref{secY4L2} to prove that for every
edge $ab$ in $Y^\infty_4$, there exists a path between $a$ and $b$ in
$Y_4$, whose length is not much more than the Euclidean distance between
$a$ and $b$. By combining this with the result of Section~\ref{sec:y4inf},
it follows that $Y_4$ is a spanner.

\section{$Y^\infty_4$: in the $L_\infty$ Metric}
\label{sec:y4inf}
In this section we focus on $Y^\infty_4$, which has a nicer structure
compared to $Y_4$. First we prove that $Y^\infty_4$ is planar. Then we
use this property to show that $Y^\infty_4$ is an $8$-spanner.
To be more precise, we prove that for any two nodes $a$ and $b$,
the graph $Y^\infty_4$ contains a path between $a$ and $b$ whose
length (in the $L_{\infty}$-metric) is at most $8 |ab|_{\infty}$.

We need a few definitions.
We say that two edges $ab$ and $cd$ \emph{properly cross} (or \emph{cross},
for short) if they share a point other than an endpoint ($a$, $b$, $c$ or $d$);
we say that $ab$ and $cd$ \emph{intersect} if they share a point (either an
interior point or an endpoint).
%
\begin{figure}[htbp]
\centering
\includegraphics[width=0.6\linewidth]{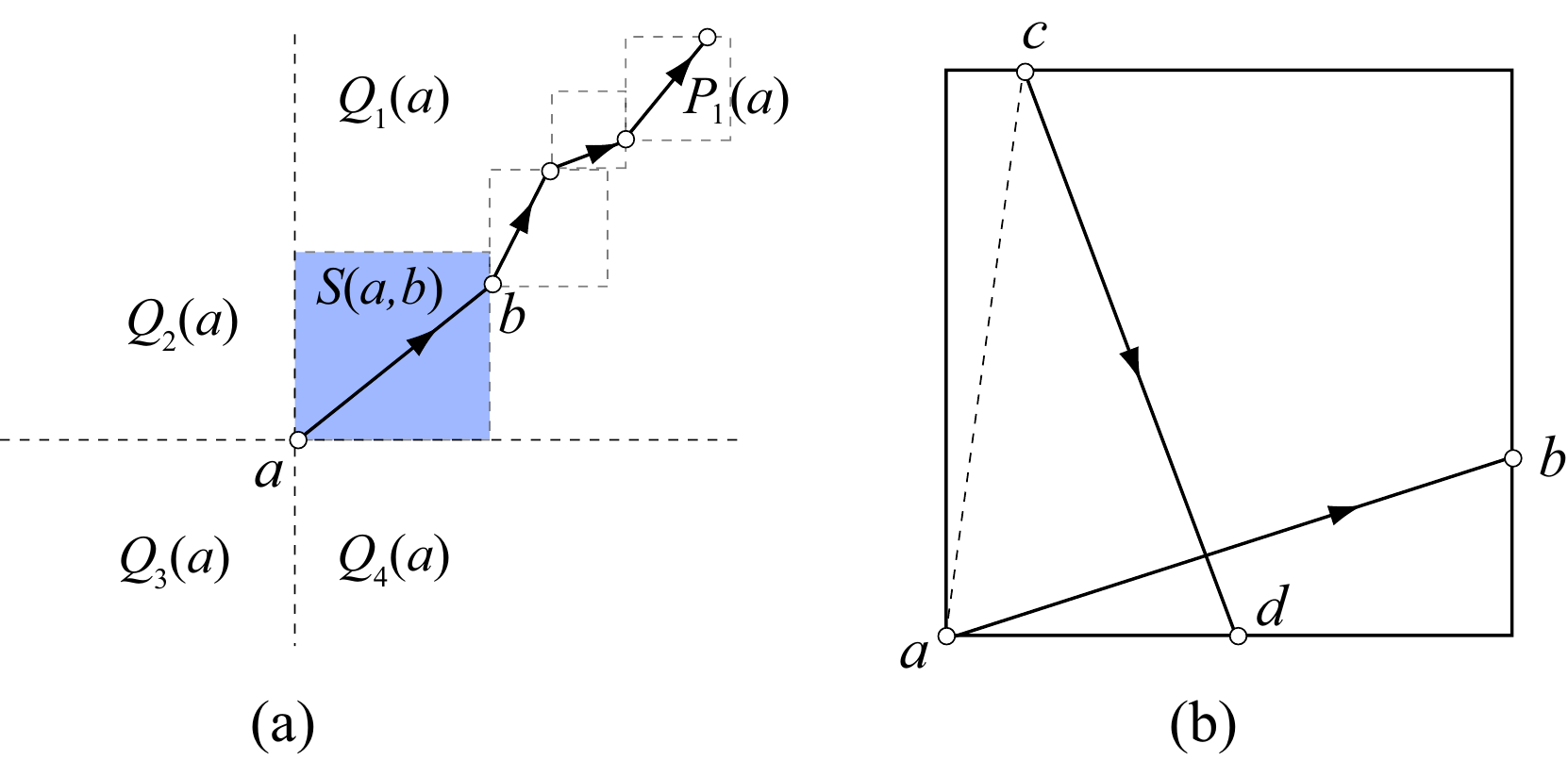}
\caption{(a) Definitions: $Q_i(a)$, $P_i(a)$ and $S(a,b)$. (b) Lemma~\ref{lem:y4infplanar}: $ab$ and $cd$ cannot cross.}
\label{fig:y4infplanar}
\end{figure}
%
Let $Q_1(a)$, $Q_2(a)$, $Q_3(a)$ and $Q_4(a)$ be
the four quadrants at $a$, as in Figure~\ref{fig:y4infplanar}a.
Let $P_i(a)$ be the path that starts at point $a$ and follows the directed
Yao edges in quadrant $Q_i$.
Let $P_i(a,b)$ be the subpath of $P_i(a)$ that starts at $a$ and ends at $b$.
Let $|ab|_\infty$ be the $L_\infty$ distance between $a$ and $b$.
Let $sp(a,b)$ denote a shortest path in $Y_4^\infty$ between $a$ and $b$.
Let $S(a,b)$ denote the open square with corner $a$ whose boundary
contains $b$, and let $\partial S(a,b)$ denote the boundary of $S(a,b)$.
These definitions are illustrated in Figure~\ref{fig:y4infplanar}a.
For a node $a \in V$, let $x(a)$ denote the $x$-coordinate of $a$ and
$y(a)$ denote the $y$-coordinate of $a$.

\begin{lemma}
$Y^\infty_4$ is planar.
\label{lem:y4infplanar}
\end{lemma}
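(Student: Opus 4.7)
I would argue by contradiction and suppose that two edges $ab, cd \in Y_4^\infty$ properly cross at an interior point $p$. Since the $Y_4^\infty$ construction is invariant under $90^\circ$ rotations (the counterclockwise tie-breaking rule is itself rotationally invariant), after relabeling and rotating I may assume that the directed edge $\overrightarrow{ab} \in Y_4^\infty$ has $b \in Q_1(a)$; the other directed edge $\overrightarrow{cd} \in Y_4^\infty$ has $d \in Q_j(c)$ for some $j \in \{1,2,3,4\}$, giving four cases. The main tool I would invoke throughout is the $L_\infty$ shortest-edge property: because $\overrightarrow{ab}$ is selected in $Q_1(a)$, the open square $S(a,b)$ contains no node of $V \cap Q_1(a)$, and by the counterclockwise tie-breaking rule, any node of $V \cap Q_1(a)$ lying on $\partial S(a,b)$ is no more counterclockwise, as seen from $a$, than $b$ itself; the analogous statement holds for $S(c,d)$ and $Q_j(c)$.

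\medskip

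\noindent For each case, the plan is to combine the inclusion $p \in \overline{S(a,b)} \cap \overline{S(c,d)}$ with the axis-aligned structure of the two squares and the chosen quadrant directions, and to deduce that one of $c, d$ is forced strictly inside $S(a,b) \cap Q_1(a)$, or symmetrically one of $a, b$ strictly inside $S(c,d) \cap Q_j(c)$. Either containment violates the shortest-edge property and hence rules out the crossing. For instance, when $d \in Q_1(c)$ (both edges ``point northeast''), the two squares overlap in an axis-aligned rectangle that contains $p$, and a short case analysis on which of $a, c$ lies farther southwest forces one endpoint of one edge into the interior of the other square's relevant quadrant.

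\medskip

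\noindent The step I expect to be the main obstacle is handling boundary configurations. In the $L_\infty$ metric an entire side of $\partial S(a,b)$ can consist of nodes equidistant from $a$, so an endpoint of $cd$ might lie on $\partial S(a,b)$ rather than strictly inside it, in which case the bare containment argument does not directly yield a contradiction. Resolving these borderline configurations requires an explicit use of the counterclockwise tie-breaking rule to show that in any such arrangement the algorithm would have preferred the edge $\overrightarrow{ac}$ or $\overrightarrow{ad}$ over $\overrightarrow{ab}$ (or the symmetric statement at $c$), so that $\overrightarrow{ab} \notin Y_4^\infty$, the desired contradiction. Once the tie cases are dispatched, the case enumeration itself is a routine geometric calculation.
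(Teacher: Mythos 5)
Your plan rests on the same two pillars as the paper's own proof---the emptiness of the open squares $S(a,b)$ and $S(c,d)$ combined with the fact that the crossing point lies in both, and the counterclockwise tie-breaking rule to dispatch configurations where endpoints sit on square boundaries---so in substance this is the paper's argument. The difference is in the decomposition, and it is worth knowing because it eliminates your four-case quadrant analysis entirely: the paper observes that since neither open square may contain $a$, $b$, $c$ or $d$, the two overlapping squares must coincide, so that $c$ and $d$ both lie on $\partial S(a,b)$, on opposite sides of the segment $ab$. Consequently the configuration you defer as the ``main obstacle'' (endpoints on the boundary, resolved by tie-breaking) is not a borderline case but the \emph{only} case; your primary line of attack, forcing an endpoint strictly inside the other square's quadrant, never fires, because strict containment is already excluded by emptiness before any case distinction is made. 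Once the squares coincide, one of $ac$, $ad$ lies counterclockwise from $ab$ with $|ac|_\infty = |ab|_\infty$, and the tie-breaking rule would have selected that edge rather than $\overrightarrow{ab}$---exactly the contradiction you anticipate. So your plan would succeed if carried out, but the step you postpone is the entire proof, and the coincidence-of-squares observation is the idea that makes it short.
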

\begin{proof}
The proof is by contradiction. Assume the opposite. Then there are
two edges $\overrightarrow{ab}, \overrightarrow{cd} \in Y^\infty_4$
that cross each other. Since $\overrightarrow{ab} \in Y^\infty_4$,
$S(a,b)$ must be empty of nodes in $V$, and similarly for $S(c,d)$. Let
$j$ be the intersection point between $ab$ and $cd$.
Then $j \in S(a,b) \cap S(c,d)$, meaning that $S(a,b)$ and $S(c,d)$ must
overlap. However, neither square may contain $a, b, c$ or $d$. It
follows that $S(a,b)$ and $S(c,d)$ coincide, meaning that $c$ and $d$
lie on $\partial S(a,b)$ (see Figure~\ref{fig:y4infplanar}b). Since $cd$
intersects $ab$, $c$ and $d$ must lie on opposite sides of $ab$.  Thus either $ac$ or
$ad$ lies counterclockwise from $ab$. Assume without loss of generality that
$ac$ lies counterclockwise from $ab$; the other case is identical.
Because $S(a,c)$ coincides with $S(a,b)$, we have that
$|ac|_\infty = |ab|_\infty$. In this case however, $Y^\infty_4$ would break
the tie between $ac$ and $ab$ by selecting the most counterclockwise edge,
which is $\overrightarrow{ac}$. This contradicts the fact that
$\overrightarrow{ab} \in Y^\infty_4$.
\end{proof}

\noindent
It can be easily shown that each face of $Y^\infty_4$ is either a triangle
or a quadrilateral (except for the outer face). We skip this proof however,
since we do not make use of this property in this paper.

\begin{theorem}
$Y_4^\infty$ is an $8$-spanner.
\label{thm:y4infspanner}
\end{theorem}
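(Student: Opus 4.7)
The plan is to show by induction on the rank of $|ab|_\infty$ among the pairwise $L_\infty$-distances of nodes of $V$ that $Y_4^\infty$ contains an $a$--$b$ path of $L_\infty$-length at most $8|ab|_\infty$. Without loss of generality I would place $a$ at the origin and assume $b \in Q_1(a)$ with $|ab|_\infty = x(b) \geq y(b)$, so that $b$ lies on the right side of $\partial S(a,b)$. The base case is immediate: if $\overrightarrow{ab} \in Y_4^\infty$, the single edge $ab$ is a path of length $|ab|_\infty$.

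For the inductive step, assume $\overrightarrow{ab} \notin Y_4^\infty$ and let $\overrightarrow{aa_1} \in Y_4^\infty$ be the outgoing Yao edge from $a$ in $Q_1$; such an edge exists because $b$ is a $Q_1(a)$-candidate. By the definition of $Y_4^\infty$, we have $a_1 \in Q_1(a) \cap \bar{S}(a,b)$, $|aa_1|_\infty \le |ab|_\infty$, and $S(a,a_1)$ is empty of nodes of $V$. I would then case-analyze which quadrant of $a_1$ contains $b$: either $b \in Q_1(a_1)$ (the progress case, where $a_1$ is strictly below-left of $b$, so $|a_1b|_\infty < |ab|_\infty$ and induction on $(a_1,b)$ yields an $a_1$--$b$ path of length at most $8|a_1 b|_\infty$) or $b \in Q_4(a_1)$ (the overshoot case, where $a_1$ has jumped past $b$ vertically and we would instead follow the Yao edge out of $a_1$ in $Q_4$ toward $b$).

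The main obstacle is that the naive one-step estimate $|aa_1|_\infty + 8|a_1 b|_\infty \le 8|ab|_\infty$ can fail in the progress case, most notably when $a_1$ lies near the upper-left corner of $\bar{S}(a,b)$, so that both $|aa_1|_\infty$ and $|a_1 b|_\infty$ are simultaneously close to $|ab|_\infty$. To handle such configurations I would exploit the planarity of $Y_4^\infty$ (Lemma~\ref{lem:y4infplanar}) together with the emptiness of $S(a,a_1)$: planarity prevents any other Yao edge from crossing $aa_1$, while $S(a,a_1)\cap V=\emptyset$ restricts where subsequent Yao neighbors along the path can lie. I expect the resulting path to decompose into at most four coordinate-monotone sub-paths, one per ``side'' of $\partial S(a,b)$, using the elementary observation that a sub-path monotone in both coordinates has $L_\infty$-length equal to the maximum of its $x$- and $y$-extents and therefore is bounded by $2|ab|_\infty$ when it remains within (a suitable neighborhood of) $S(a,b)$. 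Summing four such contributions yields the claimed bound $8|ab|_\infty$, matching twice the $L_\infty$-perimeter $4|ab|_\infty$ of $S(a,b)$.
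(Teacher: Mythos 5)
There is a genuine gap. You have the right outer frame (induction on the pairwise $L_\infty$-distances, WLOG $b\in Q_1(a)$ with $|ab|_\infty=x(b)-x(a)$) and you correctly locate the hard configuration: when the first Yao edge $aa_1$ makes little horizontal progress, $|aa_1|_\infty+8|a_1b|_\infty$ can exceed $8|ab|_\infty$. But your resolution of that configuration is asserted, not proved. The sentence ``I expect the resulting path to decompose into at most four coordinate-monotone sub-paths'' is the entire content of the theorem, and neither planarity nor the emptiness of $S(a,a_1)$ delivers it: iterating the greedy step from $a$ alone gives no control on how many times the walk can stall near the top of $S(a,b)$, and the relevant nodes do not stay in (a neighborhood of) $S(a,b)$, so the per-piece bound $2|ab|_\infty$ has no justification.

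Two ideas carry the paper's proof and are absent from yours. First, an \emph{additive} decomposition: consider the rectangle $A$ of points $c\in Q_1(a)$ for which $|ac|_\infty+|cb|_\infty=|x(c)-x(a)|+|x(b)-x(c)|=|ab|_\infty$. If $A$ contains a node, two applications of the inductive hypothesis give a path of length $8|ac|_\infty+8|cb|_\infty=8|ab|_\infty$ with no loss, so one may assume $A=\emptyset$. Second, when $A=\emptyset$ the connecting path is built by growing directed Yao paths from \emph{both} ends and from an auxiliary extremal node --- $P_1(a)$, $P_2(b)$, and $P_3(r)$/$P_1(r)$ where $r$ is the rightmost node in the triangle $abc$ above $ab$ --- and showing these must share a node, via arguments of the form ``an edge $st$ crossing $ab$ from above would satisfy $|st|_\infty>|sb|_\infty$, contradicting $st\in Y_4^\infty$,'' combined with planarity to turn an intersection of paths into a common vertex. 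Even then the pieces are not confined to $S(a,b)$: the triangle $abc$ reaches height $2|ab|_\infty$ above $a$, and the edge $rd\in P_1(r)$ can a priori be arbitrarily long, which forces the paper's further sub-cases (examining the clockwise neighbor $e$ of $d$) to recover the constant $8$. Without the $A=\emptyset$ reduction and this bidirectional meeting argument, your induction does not close.
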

\begin{proof}
We show that, for any pair of points $a, b \in V$, $|sp(a,b)|_\infty < 8|ab|_\infty$.
The proof is by induction on the pairwise distance between the points in $V$.
Assume without loss of generality that $b \in Q_1(a)$, and $|ab|_\infty = |x(b)-x(a)|$.
Consider the case in which $ab$ is a closest pair of points in $V$ (the base case
for our induction). If $ab \in Y_4^\infty$, then $|sp(a,b)|_\infty = |ab|_\infty$.
Otherwise, there must be $ac \in Y_4^\infty$, with
$|ac|_\infty = |ab|_\infty$. But then $|bc|_\infty < |ab|_\infty$
(see Figure~\ref{fig:y4inf}a), a contradiction.

\begin{figure}[htp]
\centering
\includegraphics[width=\linewidth]{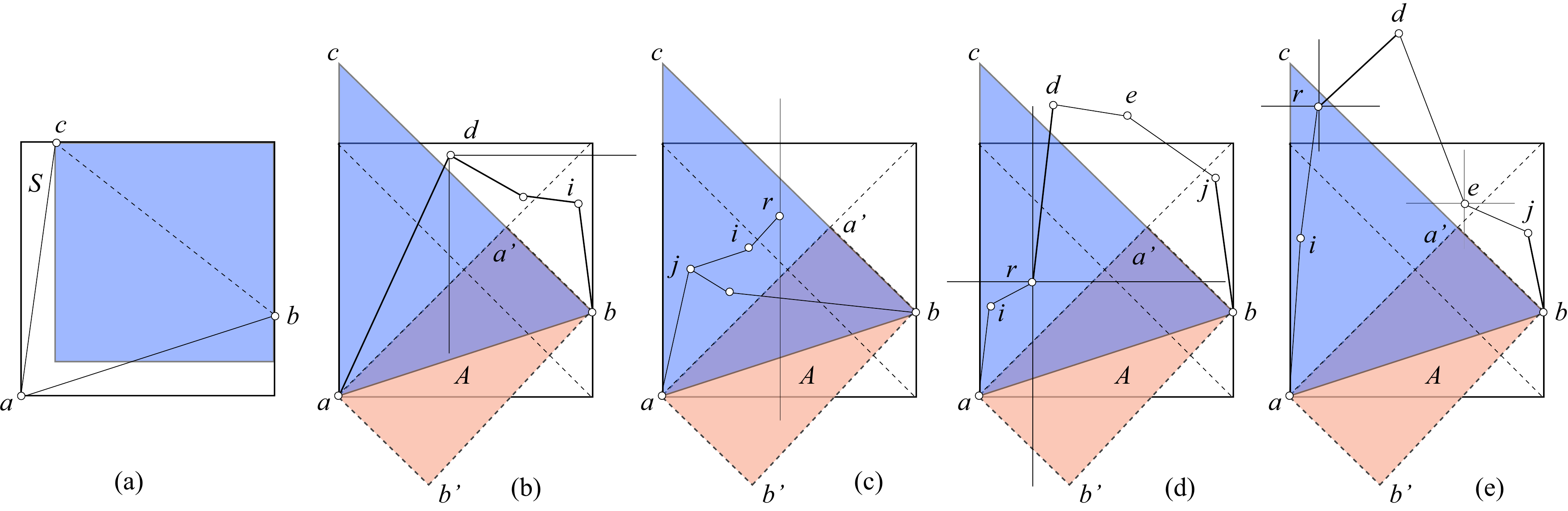}
\caption{(a) Base case. (b) $\triangle abc$ empty (c) $\triangle abc$ non-empty, $P_{ar} \cap P_2(b) = \{j\}$
(d) $\triangle abc$ non-empty, $P_{ar} \cap P_2(b) = \emptyset$, $e$ above $r$
(e) $\triangle abc$ non-empty, $P_{ar} \cap P_2(b) = \emptyset$, $e$ below $r$.}
\label{fig:y4inf}
\end{figure}

Assume now that the inductive hypothesis holds for all pairs of points
closer than $|ab|_\infty$. If $ab \in Y_4^\infty$, then $|sp(a,b)|_\infty = |ab|_\infty$
and the proof is finished. If $ab \notin Y_4^\infty$, then the square $S(a,b)$ 
must be nonempty.

Let $A$ be the rectangle $ab'ba'$ as in Figure~\ref{fig:y4inf}b, where $ba'$ and
$bb'$ are parallel to the diagonals of $S$.
If $A$ is nonempty, then we can use induction to prove that $|sp(a,b)|_\infty <= 8|ab|_\infty$
as follows. Pick $c \in A$ arbitrary. Then $|ac|_\infty + |cb|_\infty = |x(c)-x(a)| + |x(b)-x(c)|
= |ab|_\infty$,
and by the inductive hypothesis $sp(a,c) \oplus sp(c,b)$ is a path in $Y_4^\infty$ no
longer than $8|ac|_\infty + 8|cb|_\infty = 8|ab|_\infty$; here $\oplus$ represents the 
concatenation operator.
Assume now that $A$ is empty.
Let $c$ be at the intersection between the line supporting $ba'$ and the
vertical line through $a$ (see Figure~\ref{fig:y4inf}b).
We discuss two cases, depending on whether
$\triangle abc$ is empty of points or not.

\paragraph{Case 1: } $\triangle abc$ is empty of points. Let $ad \in P_1(a)$. We show that
$P_4(d)$ cannot contain an edge crossing $ab$. Assume the opposite, and let
$st \in P_4(d)$ cross $ab$. Since $\triangle abc$ is empty, $s$ must lie above
$bc$ and $t$ below $ab$,  therefore
$|st|_\infty \ge |y(s)-y(t)| > |y(s)-y(b)| = |sb|_\infty$,
contradicting the fact that $st \in Y_4^\infty$. It follows that
$P_4(d)$ and $P_2(b)$ must meet in a point $i \in P_4(d) \cap P_2(b)$ (see Figure~\ref{fig:y4inf}b).
Now note that
$|P_4(d, i)\oplus P_2(b,i)|_\infty
    \le |x(d)-x(b)| + |y(d)-y(b)| < 2|ab|_\infty$.
Thus we have that
\[
  |sp(a, b)|_\infty \le |ad \oplus P_4(d, i)\oplus P_2(b,i)|_\infty
  < |ab|_\infty + 2|ab|_\infty = 3|ab|_\infty.
\]
\paragraph{Case 2: } $\triangle abc$ is nonempty.
In this case, we seek a short path from $a$ to $b$ that does not
cross to the underside of $ab$. This is to avoid oscillating paths
that cross $ab$ arbitrarily many times. Let $r$ be the rightmost point
that lies inside $\triangle abc$. Arguments similar to the ones used in Case~1 show that
$P_3(r)$ cannot cross $ab$ and therefore it must meet $P_1(a)$ in a point $i$. Then
$P_{ar} = P_1(a, i) \oplus P_3(r, i)$ is a path in $Y_4^\infty$ of length
\begin{equation}
|P_{ar}|_\infty < |x(a)-x(r)| + |y(a)-y(r)| < |ab|_\infty + 2|ab|_\infty = 3|ab|_\infty.
\label{eq:par}
\end{equation}
The term $2|ab|_\infty$ in the inequality above represents the fact that
$|y(a) - y(r)| \le |y(a) - y(c)| \le 2|ab|_\infty$.
Consider first the simpler situation in which $P_2(b)$ meets $P_{ar}$ in a
point $j \in P_2(b) \cap P_{ar}$ (see Figure~\ref{fig:y4inf}c).
Let $P_{ar}(a, j)$ be the subpath of $P_{ar}$ extending between $a$ and $j$.
Then $P_{ar}(a, j) \oplus P_2(b,j)$ is a path in $Y_4^\infty$ from $a$ to $b$,
therefore
\[
|sp(a,b)|_\infty \le |P_{ar}(a, j) \oplus P_2(b, j)|_\infty <
2 |y(j)-y(a)| + |ab|_\infty \le 5|ab|_\infty.
\]

Consider now the case when $P_2(b)$ does not intersect $P_{ar}$. We argue that, in this
case, $Q_1(r)$ may not be empty. Assume the opposite. Then
no edge $st \in P_2(b)$ may cross $Q_1(r)$. This is because,
for any such edge, $|sr|_\infty < |st|_\infty$, contradicting
$st \in Y_4^\infty$. This implies that $P_2(b)$ intersects $P_{ar}$, again
a contradiction to our assumption.

We have established that $Q_1(r)$ is nonempty. Let $rd \in P_1(r)$.
The fact that $P_2(b)$ does not intersect $P_{ar}$ implies that $d$
lies to the left of $b$.
The fact that $r$ is the rightmost point in $\triangle abc$ implies that
$d$ lies outside $\triangle abc$ (see Figure~\ref{fig:y4inf}d). It also implies that
$P_4(d)$ shares no points with $\triangle abc$. This along with
arguments similar to the ones used in case 1 show that
$P_4(d)$ and $P_2(b)$ meet in a point $j \in P_4(d) \cap P_2(b)$.
Thus we have found a path
\begin{equation}
P_{ab} = P_1(a,i) \oplus P_3(r, i) \oplus rd \oplus P_4(d, j) \oplus P_2(b, j)
\label{eq:pabinf}
\end{equation}
extending from $a$ to $b$ in $Y_4^\infty$. If $|rd|_\infty = |x(d) - x(r)|$, then
$|rd|_\infty < |x(b)-x(a)| = |ab|_\infty$, and the path $P_{ab}$ has length
\begin{equation}
|P_{ab}|_\infty \le 2 |y(d)-y(a)| + |ab|_\infty < 7|ab|_\infty.
\label{eq:Pab}
\end{equation}
In the above, we used the fact that
$|y(d)-y(a)| = |y(d)-y(r)| + |y(r)-y(a)| < |ab|_\infty +2|ab|_\infty$.
Suppose now that
\begin{equation}
|rd|_\infty = |y(d) - y(r)|.
\label{eq:rdlen}
\end{equation}
In this case, it is unclear whether the path $P_{ab}$ defined by~(\ref{eq:pabinf}) is short,
since $rd$ can be arbitrarily long compared to $ab$. Let $e$ be the clockwise
neighbor of $d$ along the path $P_{ab}$ ($e$ and $b$ may coincide).
Then $e$ lies below $d$, and either $de \in P_4(d)$, or $ed \in P_2(e)$ (or both).
\begin{enumerate}
\item
If $e$ lies above $r$, or at the same level as $r$ (i.e., $e \in Q_1(r)$, as in Figure~\ref{fig:y4inf}d), then
\begin{eqnarray}
|y(e) - y(r)| & < & |y(d) - y(r)|
\label{eq:yrd}
\end{eqnarray}
Since $rd \in P_1(r)$ and $e$ is in the same quadrant of $r$ as $d$, we have
$|rd|_\infty \le |re|_\infty$. This along with inequalities~(\ref{eq:rdlen})
and~(\ref{eq:yrd}) implies $|re|_\infty > |y(e) - y(r)|$, which in turn
implies $|re|_\infty = |x(e) - x(r)| \le |ab|_\infty$, and so
$|rd|_\infty \le |ab|_\infty$.
Then inequality~(\ref{eq:Pab}) applies here
as well, showing that $|P_{ab}|_\infty < 7|ab|_\infty$.

\item
If $e$ lies below $r$ (as in Figure~\ref{fig:y4inf}e), then
\begin{equation}
|ed|_\infty \ge |y(d)-y(e)| \ge |y(d)-y(r)| = |rd|_\infty.
\label{eq:rde}
\end{equation}
Assume first that $ed \in P_2(e)$, or $|ed|_\infty = |x(e)-x(d)|$. In either case,
\[
  |ed|_\infty \le |er|_\infty < 2|ab|_\infty.
\]
This along with inequality~(\ref{eq:rde}) shows that $|rd|_\infty < 2|ab|_\infty$.
Substituting this upper bound in~(\ref{eq:pabinf}), we get
\[
|P_{ab}|_\infty \le 2 |y(d)-y(a)| + 2|ab|_\infty < 8|ab|_\infty.
\]
Assume now that $ed \not\in P_2(e)$, and $|ed|_\infty = |y(e)-y(d)|$. Then
$ee' \in P_2(e)$ cannot go above $d$ (otherwise $|ed|_\infty < |ee'|_\infty$,
contradicting $ee' \in P_2(e)$). This along with the fact $de \in P_4(d)$ implies
that $P_2(e)$ intersects $P_{ar}$ in a point $k$. Redefine
\[
P_{ab} = P_{ar}(a,k) \oplus P_2(e, k) \oplus P_4(e, j) \oplus P_2(b,j)
\]
Then $P_{ab}$ is a path in $Y_4^\infty$ from $a$ to $b$ of length
\[
|P_{ab}| \le 2 |y(r)-y(a)| + |ab|_\infty \le 5|ab|_\infty.
\]

\end{enumerate}
We have established that $|sp(a,b)|_\infty \le |P_{ab}|_\infty <  8|ab|_\infty$.
This concludes the proof.
\end{proof}

\noindent
This theorem will be employed in Section~\ref{sec:Yinf+Y4}.

\section{$Y_4$: in the $L_2$ Metric}
\label{secY4L2}
In this section we establish basic properties of $Y_4$. The ultimate goal of this section is
to show that, if two edges in $Y_4$ cross, there is a short path between their endpoints (Lemma~\ref{lem:recross}). We
begin with a few definitions.

Let $Q(a,b)$ denote the infinite quadrant with origin at $a$ that contains $b$.
For a pair of nodes $a, b \in V$, define recursively a directed path
$\Pa(a \rightarrow b)$ from $a$ to $b$ in $Y_4$ as follows.
If $a=b$, then $\Pa(a \rightarrow b) = null$.
If $a \neq b$, there must exist $\overrightarrow{ac} \in Y_4$ that
lies in $Q(a,b)$. In this case, define
\[ \Pa(a \rightarrow b) =
   \overrightarrow{ac} \oplus \Pa(c \rightarrow b).
\]
Recall that $\oplus$ represents the concatenation operator.
This definition is illustrated in Figure~\ref{fig:defs}a.
Fischer et al.~\cite{FLZ98} show that $\Pa(a \rightarrow b)$ is well defined and
lies entirely inside the square centered at $b$ whose boundary contains $a$.

\begin{figure}[htbp]
\centering
\includegraphics[width=0.6\linewidth]{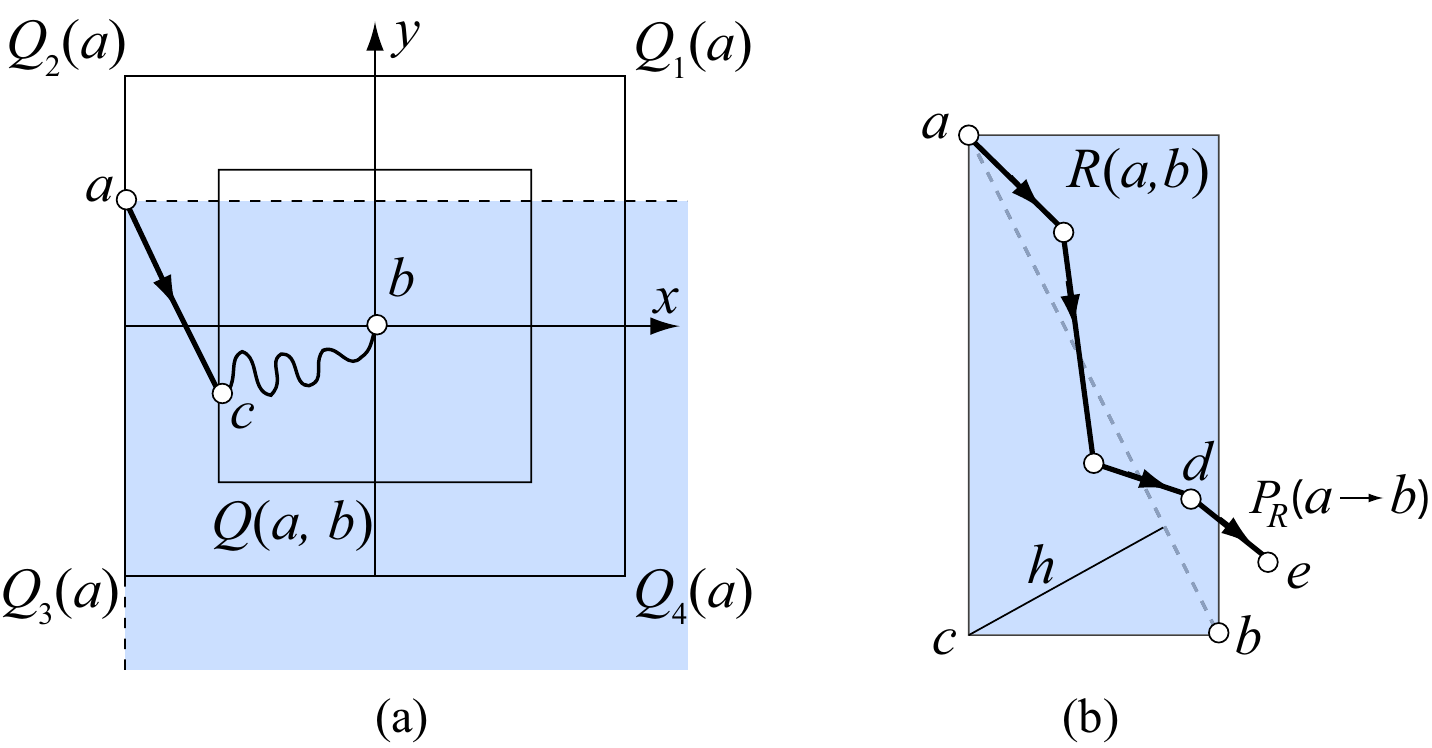}
\caption{Definitions. (a) $Q(a,b)$ and $\Pa(a \rightarrow b)$. (b) $\Pa_R(a \rightarrow b)$.}
\label{fig:defs}
\end{figure}

For any node $a \in V$, let $D(a,r)$ denote the open disk centered at
$a$ of radius $r$, and let $\partial D(a,r)$ denote the boundary of
$D(a, r)$. Let $D[a,r] = D(a,r) \cup \partial D(a,r)$.
For any path $P$ and any pair of nodes $a$ and $b$ on $P$, let $P[a,b]$
denote the subpath of $P$ that starts at $a$ and ends at $b$.
Let $R(a,b)$ denote the closed rectangle with diagonal $ab$.

For a fixed pair of nodes $a,b \in V$, define a path
$\Pa_R(a \rightarrow b)$ as follows. Let $e \in V$ be the first node
along $\Pa(a \rightarrow b)$ that is not strictly interior to
$R(a,b)$. Then $\Pa_R(a \rightarrow b)$ is the subpath of $\Pa(a \rightarrow b)$
that extends between $a$ and $e$. In other words, $\Pa_R(a \rightarrow b)$ is
the path that follows the $Y_4$ edges pointing towards $b$,  truncated as soon
as it leaves the rectangle with diagonal $ab$, or as it reaches $b$. Formally,
\[
\Pa_R(a \rightarrow b) = \Pa(a \rightarrow b)[a, e]
\]
This definition is illustrated in Figure~\ref{fig:defs}b.

Our proofs will make use of the following two propositions.

\begin{proposition}
The sum of the lengths of crossing diagonals of a nondegenerate
(necessarily convex) quadrilateral $abcd$ is strictly greater than the sum of
the lengths of either pair of opposite sides:
\begin{eqnarray*}
|ac| + |bd| &>& |ab| + |cd| \\
|ac| + |bd| &>& |bc| + |da|
\end{eqnarray*}
\label{fact:quad}
\end{proposition}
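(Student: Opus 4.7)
The plan is to exploit the fact that in a nondegenerate convex quadrilateral $abcd$, the diagonals $ac$ and $bd$ cross at an interior point, call it $p$. Splitting each diagonal at $p$ yields $|ac|=|ap|+|cp|$ and $|bd|=|bp|+|dp|$, which lets me rewrite the desired lower bounds on $|ac|+|bd|$ as statements about the four subtriangles incident to $p$.

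For the first inequality $|ac|+|bd|>|ab|+|cd|$, I would apply the triangle inequality in $\triangle abp$ and in $\triangle cdp$, obtaining $|ab|<|ap|+|bp|$ and $|cd|<|cp|+|dp|$, and then add these and regroup the right-hand side as $(|ap|+|cp|)+(|bp|+|dp|)=|ac|+|bd|$. For the second inequality $|ac|+|bd|>|bc|+|da|$, I would apply the triangle inequality in $\triangle bcp$ and in $\triangle dap$ in the analogous way and add.

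The only subtlety is justifying the strictness of the triangle inequalities used, which requires that each of the four subtriangles $\triangle abp$, $\triangle cdp$, $\triangle bcp$, $\triangle dap$ be nondegenerate. This follows from the hypothesis that $abcd$ is a nondegenerate convex quadrilateral: no three of the four vertices are collinear, and $p$ lies strictly in the interior of both diagonals, so in each subtriangle the three vertices are noncollinear. I do not foresee any real obstacle here; the whole argument is essentially two applications of the triangle inequality, and the main thing to get right is to state the nondegeneracy clearly so that the inequalities are strict rather than weak.
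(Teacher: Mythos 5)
Your proposal is correct and matches the paper's argument exactly: the paper also partitions the diagonals at their intersection point and applies the triangle inequality twice. Your additional care about strictness via nondegeneracy is a reasonable elaboration of the same idea, not a different route.
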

This can be proved by partitioning
the diagonals into two pieces each at their intersection
point, and then applying the triangle inequality twice.

\begin{proposition}
For any triangle $\triangle abc$, the following inequalities hold:
\[|ac|^2
\begin{cases}
< |ab|^2 + |bc|^2, & \text{if } \angle{abc} < \pi/2 \\
= |ab|^2 + |bc|^2, & \text{if } \angle{abc} = \pi/2 \\
> |ab|^2 + |bc|^2, & \text{if } \angle{abc} > \pi/2 \\
\end{cases}
\]
\label{fact:tri}
\end{proposition}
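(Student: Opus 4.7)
The statement is essentially the law of cosines packaged as a three-way comparison, so the plan is to reduce it to a single trigonometric identity and then read off the three cases from the sign of one cosine. Concretely, I would begin by invoking the law of cosines for $\triangle abc$, namely
\[
|ac|^2 \;=\; |ab|^2 + |bc|^2 - 2\,|ab|\,|bc|\cos(\angle abc).
\]
Since $|ab|$ and $|bc|$ are strictly positive for a nondegenerate triangle, the sign of the correction term $-2|ab||bc|\cos(\angle abc)$ is determined entirely by the sign of $\cos(\angle abc)$. That sign is negative, zero, or positive according as $\angle abc$ is less than, equal to, or greater than $\pi/2$, which gives exactly the three inequalities claimed.

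If one prefers not to cite the law of cosines as a black box, I would sketch a self-contained Euclidean proof instead. Drop the perpendicular from $a$ to the line through $b$ and $c$, and let $f$ denote its foot. The Pythagorean theorem applied to the right triangles $\triangle afb$ and $\triangle afc$ gives $|ab|^2 = |af|^2 + |bf|^2$ and $|ac|^2 = |af|^2 + |cf|^2$. Expressing $|cf|$ as $|bc| \pm |bf|$ depending on whether $f$ lies on the same side of $b$ as $c$ or on the opposite side (i.e.\ whether $\angle abc$ is acute or obtuse), expanding the square, and substituting yields
\[
|ac|^2 \;=\; |ab|^2 + |bc|^2 \mp 2\,|bc|\,|bf|,
\]
with the sign chosen according to the acute/obtuse case, and the right-angle case corresponding to $f = b$ so that $|bf| = 0$. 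Reading off the three subcases completes the proof.

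There is no real obstacle here: the only mild subtlety is handling the three positional configurations of the foot $f$ relative to the segment $bc$ in the elementary version, but this is standard bookkeeping rather than a genuine difficulty. Since the proposition is only used later as a routine geometric tool, a one-sentence appeal to the law of cosines is entirely adequate.
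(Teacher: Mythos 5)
Your proof is correct and matches the paper's approach exactly: the paper simply observes that the proposition ``follows immediately from the Law of Cosines applied to triangle $\triangle abc$,'' which is precisely your first paragraph. The additional elementary argument via the foot of the perpendicular is a fine optional supplement but not needed.
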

This proposition follows immediately from the Law of Cosines applied
to triangle $\triangle abc$.

\begin{lemma}
For each pair of nodes $a, b \in V$,
\begin{equation}
|\Pa_R(a \rightarrow b)| \le |ab| \sqrt{2}
\label{eq:PR}
\end{equation}
Furthermore, each edge of $\Pa_R(a \rightarrow b)$ is no longer than $|ab|$.
\label{lem:PR}
\end{lemma}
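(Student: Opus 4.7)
The plan is to set up coordinates with $a$ at the origin and $b=(w,h)$ with $w,h \ge 0$ (by reflection, WLOG). Write the path as $a = u_0, u_1, \ldots, u_{k-1}, u_k = e$, where by definition $u_0, \ldots, u_{k-1}$ lie in $R(a,b)$ and $u_k = e$ is the first vertex that is not strictly interior.

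The central observation is that for every $i \le k-1$, since $u_i \in R(a,b)$, the target $b$ lies in the upper-right quadrant at $u_i$; hence the directed Yao edge $\overrightarrow{u_i u_{i+1}} \in Q(u_i,b)$ has both a nonnegative $x$-component $\Delta x_i$ and a nonnegative $y$-component $\Delta y_i$. Combined with the Yao property this gives the second claim immediately: $b$ itself is a candidate edge out of $u_i$ in $Q(u_i,b)$, so $|u_i u_{i+1}| \le |u_i b|$, and because $u_i \in R(a,b)$ the diagonal is the worst case, giving $|u_i b| \le |ab|$.

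For the length bound, I would use the componentwise inequality $\sqrt{\alpha^2+\beta^2} \le \alpha + \beta$ for $\alpha,\beta \ge 0$ on each edge. Summing the bound $|u_i u_{i+1}| \le \Delta x_i + \Delta y_i$ over the ``interior'' edges $i = 0, \ldots, k-2$ telescopes to $x(u_{k-1}) + y(u_{k-1})$. For the last edge $u_{k-1} e$, I route through the Yao bound rather than directly through the coordinates of $e$ (since $e$ may lie outside $R(a,b)$):
\[
|u_{k-1} e| \le |u_{k-1} b| \le \bigl(w - x(u_{k-1})\bigr) + \bigl(h - y(u_{k-1})\bigr),
\]
where the second inequality again uses $\sqrt{\alpha^2+\beta^2} \le \alpha+\beta$. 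Adding the two contributions, the $x(u_{k-1})$ and $y(u_{k-1})$ terms cancel and the total is at most $w+h$. Finally, $w+h \le \sqrt{2(w^2+h^2)} = \sqrt{2}\,|ab|$ (by Cauchy--Schwarz, or equivalently from $(w-h)^2 \ge 0$), completing the proof.

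The only mildly delicate point — and the one the plan must handle correctly — is that $e$ is allowed to exit the rectangle, so its $x$- and $y$-displacements from $u_{k-1}$ cannot be bounded directly by $w - x(u_{k-1})$ and $h - y(u_{k-1})$ separately; one can only control the Euclidean length $|u_{k-1}e|$, via Yao's inequality against $|u_{k-1}b|$. Routing the bound through $|u_{k-1}b|$ and then applying the componentwise inequality is what keeps the analysis clean and yields precisely the factor $\sqrt{2}$.
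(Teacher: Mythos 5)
Your proof is correct and follows essentially the same route as the paper's: both arguments bound the final edge by the Yao inequality against the distance to $b$, exploit the $xy$-monotonicity of the path inside $R(a,b)$ (your telescoping sum is exactly the paper's bound $|p| \le |ac| + |cb|$ via the two sides of the rectangle), and finish with $w + h \le \sqrt{2}\,|ab|$. The handling of the last edge leaving the rectangle, which you flag as the delicate point, is treated identically in the paper by replacing $de$ with $db$.
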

\begin{proof}
Let $c$ be one of the two corners of $R(a, b)$, other than $a$ and $b$.
Let $\overrightarrow{de} \in \Pa_R(a \rightarrow b)$ be the last edge
on $\Pa_R(a \rightarrow b)$, which necessarily intersects
$\partial R(a,b)$ (note that it is possible that $e = b$). Refer to
Figure~\ref{fig:defs}b.
Then $|de| \le |db|$, otherwise $\overrightarrow{de}$ could not be in $Y_4$.
Since $db$ lies in the rectangle with diagonal $ab$, we have that
$|db| \le |ab|$,
and similarly for each edge on $\Pa_R(a \rightarrow b)$.
This establishes the latter claim of the lemma. For the first claim of
the lemma, let
\[
    p = \Pa_R(a \rightarrow b)[a,d] \oplus db
\]
Since $|de| \le |db|$, we have that $|\Pa_R(a \rightarrow b)| \le |p|$.
Since $p$ lies entirely inside $R(a,b)$ and consists of edges pointing
towards $b$, we have that $p$ is an $xy$-monotone path. It follows that
$|p| \le |ac| + |cb|$. We now show that
$|ac|+|cb| \le |ab|\sqrt{2}$, thus establishing the first
claim of the lemma.

Let $x=|ac|$ and $y=|cb|$. Then the inequality
$|ac|+|cb| \le |ab|\sqrt{2}$ can be written as
$x+y \leq \sqrt{2 x^2 + 2y^2}$, which is equivalent to
$(x-y)^2 \geq 0$. This latter inequality obviously holds,
completing the proof of the lemma.
\end{proof}

\begin{lemma}
Let $a, b, c, d \in V$ be four disjoint nodes such that
$\overrightarrow{ab}, \overrightarrow{cd} \in Y_4$, $b \in Q_i(a)$
and $d \in Q_i(c)$, for some $i \in \{1,2,3,4\}$. Then $ab$ and $cd$
cannot cross each other.
\label{lem:same.quadrant}
\end{lemma}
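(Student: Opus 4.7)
The plan is to derive a contradiction from a putative crossing by combining the Yao-minimality of the two edges with Proposition~\ref{fact:quad} applied to the quadrilateral on $\{a,b,c,d\}$. Without loss of generality, assume $i=1$, so $b$ lies in the upper-right quadrant of $a$ and $d$ in the upper-right quadrant of $c$, and suppose for contradiction that $ab$ and $cd$ cross.

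The first step is to pin down the relative positions of the four nodes. Because $b \in Q_1(a)$, the segment $ab$ is contained in the closed axis-aligned rectangle with corners $a$ and $b$; likewise $cd$ lies in the rectangle with corners $c$ and $d$. Hence the crossing point $j$ of $ab$ and $cd$ satisfies simultaneously $x(a) \le x(j) \le x(b)$, $y(a) \le y(j) \le y(b)$, $x(c) \le x(j) \le x(d)$, and $y(c) \le y(j) \le y(d)$. Chaining these inequalities through $j$ yields $x(a) \le x(d)$, $y(a) \le y(d)$, $x(c) \le x(b)$, and $y(c) \le y(b)$; that is, $d \in Q_1(a)$ and $b \in Q_1(c)$.

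The rest is immediate. Since $\overrightarrow{ab} \in Y_4$ is a shortest outgoing edge from $a$ in $Q_1(a)$ and $d \in Q_1(a) \cap V$, we have $|ab| \le |ad|$; symmetrically, $|cd| \le |cb|$, and adding gives $|ab| + |cd| \le |ad| + |bc|$. On the other hand, a proper crossing of $ab$ and $cd$ forces $a, b, c, d$ to be the vertices of a non-degenerate convex quadrilateral whose diagonals are exactly $ab$ and $cd$, so $\{ad, bc\}$ is one pair of opposite sides; Proposition~\ref{fact:quad} then yields $|ab| + |cd| > |ad| + |bc|$, contradicting the previous inequality. The only nontrivial step is the first one---recognizing that the crossing of two same-quadrant Yao edges automatically places each ``far'' endpoint in the matching quadrant of the other edge's origin---after which the Yao-minimality and Proposition~\ref{fact:quad} close the argument with no further case analysis.
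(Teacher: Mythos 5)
Your proof is correct and takes essentially the same route as the paper's: both deduce from the crossing point that $d \in Q_1(a)$ and $b \in Q_1(c)$, then combine the Yao-minimality inequalities $|ab| \le |ad|$ and $|cd| \le |cb|$ with Proposition~\ref{fact:quad} applied to the quadrilateral whose diagonals are $ab$ and $cd$. The only cosmetic difference is that the paper uses just $|ab| \le |ad|$ and extracts $|cb| < |cd|$ from the proposition, whereas you add the two minimality inequalities first and contradict the summed form directly.
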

\begin{proof}
We may assume without loss of generality that $i=1$ and $c$ is to the
left of $a$. The proof is by contradiction. Assume that $ab$ and $cd$ cross each
other. Let $j$ be the intersection point between $ab$ and $cd$ (see Figure~\ref{fig:ratio}a).
Since $j \in Q_1(a) \cap Q_1(c)$,
it follows that $d \in Q_1(a)$ and $b \in Q_1(c)$. Thus $|ab| \le |ad|$, because
otherwise, $\overrightarrow{ab}$ cannot be in $Y_4$.
By
Proposition~\ref{fact:quad}
applied to the quadrilateral $adbc$,
\[
    |ad| + |cb| < |ab| +|cd|
\]
This along with the fact that $|ab| \le |ad|$ implies that
$|cb| < |cd|$, contradicting the fact that $\overrightarrow{cd} \in Y_4$.
\end{proof}

The next four lemmas (\ref{lem:quad}--\ref{lem:recross})
each concern a pair of crossing $Y_4$ edges,
culminating (in Lemma~\ref{lem:recross}) in the conclusion
that there is a short path in $Y_4$ between a pair of endpoints
of those edges.

\begin{lemma}
Let $a$, $b$, $c$ and $d$ be four disjoint nodes in $V$ such that
$\overrightarrow{ab}, \overrightarrow{cd} \in Y_4$, and
$ab$ crosses $cd$. Then the following are true: (i) the ratio between the
shortest side and the longer diagonal of the quadrilateral $acbd$ is
no greater than $1/\sqrt{2}$, and (ii) the
shortest side of the quadrilateral $acbd$ is strictly shorter
than either diagonal.
\label{lem:quad}
\end{lemma}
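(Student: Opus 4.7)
The plan is to analyze the intersection point of the two diagonals with the Law of Cosines (Proposition~\ref{fact:tri}), and to invoke the Yao hypothesis via Lemma~\ref{lem:same.quadrant} whenever the purely metric bound falls short. Let $j$ denote the intersection of $ab$ and $cd$, put $p_1=|aj|$, $p_2=|jb|$, $q_1=|cj|$, $q_2=|jd|$, and let $\alpha$ be the angle between the rays $\overrightarrow{ja}$ and $\overrightarrow{jc}$. Relabeling $c\leftrightarrow d$ if necessary, I assume $\alpha\le\pi/2$, so that $\triangle ajc$ and $\triangle bjd$ are the non-obtuse-at-$j$ pair and Proposition~\ref{fact:tri} gives
\[
|ac|^2 \le p_1^2+q_1^2, \qquad |bd|^2 \le p_2^2+q_2^2.
\]
I also assume $|ab|\le|cd|$ (swap the two diagonals otherwise), so that $|cd|$ is the longer diagonal.

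For part~(i), the key observation is that $p_1+p_2=|ab|$ and $q_1+q_2=|cd|$, so some $p_i\le|ab|/2$ and some $q_k\le|cd|/2$. In the easy case, the small halves lie at matching ends of the two diagonals (both at the $a$--$c$ end or both at the $b$--$d$ end), and one of the displayed inequalities immediately yields $|ac|^2\le|ab|^2/4+|cd|^2/4\le|cd|^2/2$ or its analogue for $|bd|$, so the shortest side is at most $|cd|/\sqrt{2}$. The hard case is when the small halves are crossed, WLOG $p_1\le|ab|/2$ and $q_2\le|cd|/2$: then $a$ and $d$ sit close to $j$ but on the obtuse side of the crossing, and the Law-of-Cosines bound on $|ad|$ is too weak on its own. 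I plan to dispose of this case by a short case analysis on which of $c,d$ lies in $Q(a,b)$ and which of $a,b$ lies in $Q(c,d)$: the Yao hypothesis forces certain sides to be at least one of the diagonals, Lemma~\ref{lem:same.quadrant} rules out the same-quadrant-label configuration, and in each surviving sub-configuration either the crossed-halves geometry is impossible or some side admits the desired $|cd|/\sqrt{2}$ bound.

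For part~(ii), Proposition~\ref{fact:quad} gives $|ac|+|bd|<|ab|+|cd|$ and $|ad|+|bc|<|ab|+|cd|$, so the perimeter of $acbd$ is strictly less than $2(|ab|+|cd|)$, and in particular the shortest side is strictly less than the longer diagonal. Strict inequality with the shorter diagonal then follows by pairing the shortest side (identified in the case analysis from~(i)) with its Yao-forced long opposite side inside one of the inequalities of Proposition~\ref{fact:quad}. The main obstacle throughout is the hard case of~(i): the pure metric argument does not suffice there, and correctly enumerating the quadrant placements, excluding the forbidden ones via Lemma~\ref{lem:same.quadrant}, and verifying the $|cd|/\sqrt{2}$ bound in each surviving sub-case is the technical crux of the proof.
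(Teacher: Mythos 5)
There is a genuine gap: both halves of your argument stop exactly at the point where the work has to be done, and the deferred ``short case analysis'' is asserted rather than carried out. For part~(i), your easy case is fine, but the crossed case is the whole content of the claim, and your plan for it is only a promise. Worse, the tool you reach for there cannot deliver the required constant: the Yao hypothesis, fed through Proposition~\ref{fact:quad} or Proposition~\ref{fact:tri}, yields conclusions of the form ``some side is strictly shorter than some diagonal,'' but part~(i) needs the quantitative factor $1/\sqrt{2}$ (which is essential downstream, in inequality~(\ref{eq:rq}), for the recursion of Lemma~\ref{lem:recross} to converge). Nothing in your sketch indicates where that factor would come from in the crossed case. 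In fact part~(i) is a purely metric statement about \emph{any} two crossing segments, needing no Yao hypothesis at all; the paper simply cites \cite{Quad51} for it. If you want a self-contained argument, your own two displayed inequalities already suffice: writing $|ac|^2\le p_1^2+q_1^2-2p_1q_1\cos\alpha$ and $|ad|^2\le p_1^2+q_2^2+2p_1q_2\cos\alpha$ and taking the combination $q_2\cdot|ac|^2+q_1\cdot|ad|^2$ cancels the cosine terms and gives $\min(|ac|,|ad|)^2\le p_1^2+q_1q_2\le |cd|^2/2$ once $p_1\le |ab|/2\le|cd|/2$; no case split on where the small halves sit is needed.

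For part~(ii), your perimeter argument correctly gives ``shortest side $<$ longer diagonal,'' but the hard half is ``shortest side $<$ \emph{shorter} diagonal,'' and here the step ``pair the shortest side with its Yao-forced long opposite side'' is unjustified. The Yao hypothesis only controls sides incident to $a$ (for points in $Q(a,b)$) and to $c$ (for points in $Q(c,d)$); the side opposite the shortest side need not be incident to either apex (if the shortest side is $ac$, its opposite is $bd$, which touches neither), and even when it is, only one of the two inequalities $|\cdot|\ge|ab|$, $|\cdot|\ge|cd|$ is typically forced, which via Proposition~\ref{fact:quad} bounds the shortest side by the \emph{other} diagonal, not by the minimum of the two. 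The paper's proof of~(ii) is structured quite differently: it takes $ab$ to be the shorter diagonal, draws the disks $D(a,|ab|)$ and $D(b,|ab|)$, and shows by a region analysis that if neither $c$ nor $d$ lies in their union, then $\angle cad\ge\pi/2$ or $\angle cbd\ge\pi/2$ with the relevant endpoint in $Q(c,d)$, contradicting $\overrightarrow{cd}\in Y_4$; hence some side has length $<|ab|$. To repair your proof you would need either to adopt that disk argument or to actually exhibit, configuration by configuration, two Yao-forced long sides whose opposites (via the two inequalities of Proposition~\ref{fact:quad}) are each shorter than a different diagonal.
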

\begin{proof}
The first part of the lemma is a well-known fact that holds for
any quadrilateral (see~\cite{Quad51}, for instance).
For the second part of the lemma, let $ab$ be the shorter of the
diagonals of $acbd$, and assume without loss of generality
that $\overrightarrow{ab} \in Q_1(a)$.
Imagine two disks $D_a = D(a, |ab|)$ and $D_{b} = D(b,  |ab|)$,
as in Figure~\ref{fig:ratio}b. If either $c$ or $d$ belongs to
$D_a \cup D_{b}$,
then the lemma follows: a shortest quadrilateral edge is shorter
than $|ab|$.

\begin{figure}[htbp]
\centering
\includegraphics[width=0.85\linewidth]{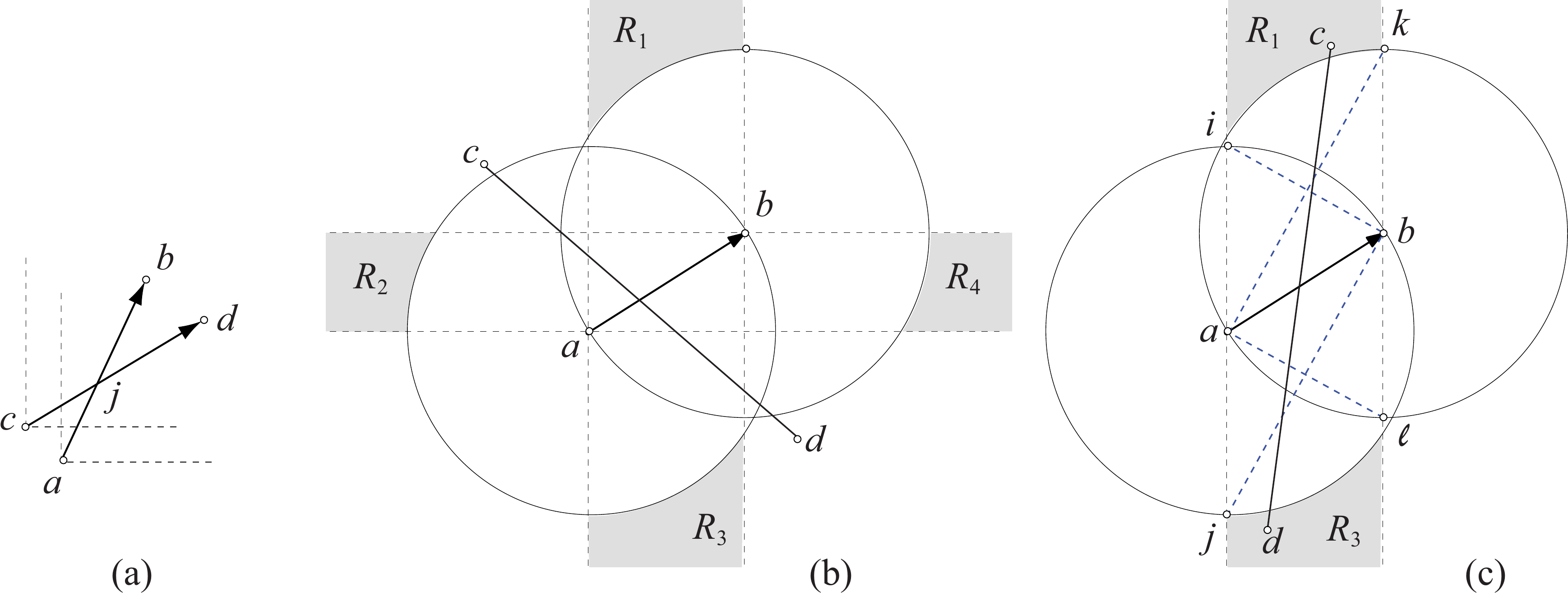}
\caption{(a) Lemma~\ref{lem:same.quadrant}. (b) Lemma~\ref{lem:quad}: $c \notin R_1\cup R_2\cup R_3 \cup R_4$ (c) Lemma~\ref{lem:quad}: $c \in R_1$.}
\label{fig:ratio}
\end{figure}

So suppose that neither $c$ nor $d$ lies in $D_a \cup D_{b}$. In
this case, we use the fact that $cd$ crosses $ab$ to show that
$\overrightarrow{cd}$ cannot be an edge in $Y_4$.
Define the following regions (see Figure~\ref{fig:ratio}b):
\begin{eqnarray*}
R_1&=& (Q_1(a) \cap Q_2(b)) \backslash( D_a \cup D_{b})\\
R_2&=& (Q_2(a) \cap Q_3(b)) \backslash( D_a \cup D_{b})\\
R_3&=& (Q_4(a) \cap Q_3(b)) \backslash( D_a \cup D_{b})\\
R_4&=& (Q_1(a) \cap Q_4(b)) \backslash( D_a \cup D_{b})
\end{eqnarray*}
If the node $c$ is not inside any of the regions $R_i$, for $i = \{1,2,3,4\}$, then
the nodes $a$ and $b$ are in the same quadrant of $c$ as $d$. In this case, note that
either $\angle cad > \pi/2$ or $\angle cbd > \pi/2$, which implies that either
$|ca|$ or $|cb|$ is strictly smaller than $|cd|$. These together show that
$\overrightarrow{cd} \notin Y_4$.

So assume that $c$ is in $R_i$ for some $i \in \{1,2,3,4\}$. In this situation,
the node $d$ must lie in the region $R_j$, with $j = (i+2) \bmod 4$ (with the
understanding that $R_0 = R_4$),
because otherwise, (i) $a$ and $d$ are in the same quadrant of $c$ and
$|ca|<|cd|$ or (ii) $b$ and $d$ are in the same quadrant of $c$ and
$|cb|<|cd|$. Either case contradicts the fact $\overrightarrow{cd} \in Y_4$.
Consider now the case $c \in R_1$ and $d \in R_3$; the other cases are treated similarly.
Let $i$ and $j$ be the intersection points between $D_a$ and the
vertical line through $a$. Similarly, let $k$ and $\ell$ be the
intersection points between $D_b$ and the vertical line through $b$
(see Figure~\ref{fig:ratio}c).
Since $ij$ is a diameter of $D_a$, we have that $\angle{ibj} = \pi/2$ and similarly
$\angle{kal} = \pi/2$. Also note that $\angle{cbd} \ge \angle{ibj} = \pi/2$,
meaning that $|cd| > |cb|$. Similarly, $\angle{cad} \ge \angle{kal} = \pi/2$,
meaning that $|cd| > |ca|$. These along with the fact that at least one of $a$ and
$b$ is in the same quadrant for $c$ as $d$, imply that
$\overrightarrow{cd} \notin Y_4$. This completes the proof.
\end{proof}

\begin{lemma}
Let $a, b, c, d$ be four distinct nodes in $V$, with $c \in Q_1(a)$, such that
\begin{enumerate}
\item [(a)] $\overrightarrow{ab} \in Q_1(a)$ and
$\overrightarrow{cd} \in Q_2(c)$ are two edges in $Y_4$ that cross each other.
\item [(b)] $ad$ is a shortest side of the quadrilateral $acbd$. 
\end{enumerate}
Then $\Pa_R(a \rightarrow d)$ and $\Pa_R(d \rightarrow a)$ have a nonempty intersection.
\label{lem:ec1}
\end{lemma}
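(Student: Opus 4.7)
The plan is to argue by contradiction, assuming $\Pa_R(a\to d)\cap\Pa_R(d\to a)=\emptyset$.

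First, I would pin down the geometry. From $\overrightarrow{ab}\in Y_4\cap Q_1(a)$ and $c\in Q_1(a)$ we get $|ab|\le|ac|$; combined with Lemma~\ref{lem:quad}(ii), which gives $|ad|<|ab|$, this excludes $d\in Q_1(a)$. The hypotheses $c\in Q_1(a)$ and $d\in Q_2(c)$ further force $d.y>c.y>0$, so $d\in Q_2(a)$ and, dually, $a\in Q_4(d)$. Consequently $\Pa_R(a\to d)$ begins with a Yao edge $\overrightarrow{aa_1}\in Q_2(a)$ satisfying $|aa_1|\le|ad|$, and $\Pa_R(d\to a)$ begins with a Yao edge $\overrightarrow{dd_1}\in Q_4(d)$ with $|dd_1|\le|ad|$; both paths thus enter $R(a,d)$ from opposite corners.

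Next, I would exploit the monotone-staircase structure built into the definition of $\Pa$: while the current node of $\Pa_R(a\to d)$ is strictly interior to $R(a,d)$, the target $d$ sits strictly in its upper-left quadrant, so each successive edge strictly decreases $x$ and strictly increases $y$, and dually for $\Pa_R(d\to a)$. If $\Pa_R(a\to d)$ reaches $d$ then $d$ lies on both paths, and symmetrically if $\Pa_R(d\to a)$ reaches $a$; in either case we are done. So I may assume both paths are truncated before hitting the opposite corner. Monotonicity then forces $\Pa_R(a\to d)$ to leave $R(a,d)$ across the top or left side (the two sides incident to $d$), and $\Pa_R(d\to a)$ across the bottom or right side (the two sides incident to $a$).

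The main obstacle is ruling out this ``separated staircases'' configuration. My plan is to use the crossing of $\overrightarrow{ab}$ and $\overrightarrow{cd}$ together with Lemmas~\ref{lem:same.quadrant} and~\ref{lem:quad} to locate an intermediate node $u$ on one of the two paths whose Yao neighbor in the quadrant pointing toward the opposite endpoint must coincide with a node on the other path. The shortest-side-versus-diagonal comparison from Lemma~\ref{lem:quad}, the non-crossing property of Lemma~\ref{lem:same.quadrant} applied to edges out of $u$ and out of a candidate node on the opposing staircase, and the distance inequalities of Propositions~\ref{fact:quad} and~\ref{fact:tri} applied to small quadrilaterals and triangles formed along the staircases, should combine to contradict the Yao minimality of $\overrightarrow{aa_1}$, $\overrightarrow{dd_1}$, or of the Yao edge emanating from $u$.

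I expect the technical heart of the proof to lie in this last distance-comparison step, since the configuration is tight enough that the contradiction must come from a precise interplay between the geometry of $R(a,d)$ and the constraints on $b$ and $c$ forced by the crossing of $\overrightarrow{ab}$ and $\overrightarrow{cd}$.
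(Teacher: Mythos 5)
Your setup is sound, and the first step is arguably cleaner than the paper's: excluding $d\in Q_1(a)$ directly from $|ad|<|ab|$ (Lemma~\ref{lem:quad}(ii) plus hypothesis (b)) and the Yao minimality of $\overrightarrow{ab}$ in $Q_1(a)$ is a nice shortcut, and your identification of the admissible exit sides of the two truncated staircases is correct. But the proof stops exactly where the real work begins. Two monotone staircases joining $a$ and $d$ to the prescribed pairs of sides of $R(a,d)$ need \emph{not} intersect: if $\Pa_R(a\to d)$ exits through the top while $\Pa_R(d\to a)$ exits through the bottom (or left/right, respectively), the two paths can be disjoint. What has to be shown is that $\Pa_R(a\to d)$ exits through the \emph{left} side and $\Pa_R(d\to a)$ exits through the \emph{bottom} side; only that combination topologically forces a crossing. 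The first half is the easy one and is where Lemma~\ref{lem:same.quadrant} genuinely applies: since $c\in Q_1(a)$, the segment $cd$ runs from the right side of $R(a,d)$ to the corner $d$, so exiting through the top would force $\Pa_R(a\to d)$, whose edges lie in $Q_2$ of their sources, to cross $cd\in Q_2(c)$.

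The second half --- no edge $\overrightarrow{xy}$ of $\Pa_R(d\to a)$ crosses $ab$, which is what rules out the right-side exit --- is the technical heart, and your sketch neither isolates it nor indicates how it would be proved. It is a genuinely quantitative statement in which hypothesis (b) enters far beyond the inequality $|ad|<|ab|$: the paper first bounds the height $\delta$ of $x$ above the horizontal through $a$ by $|ab|/\sqrt{2}$, splitting into the cases $|ac|>|cb|$ and $|ac|\le|cb|$ (in the first, three angles of $acbd$ are shown acute so that $\angle adb$ is obtuse and $|ad|\le|ab|/\sqrt{2}$ because $ad$ is a shortest side; in the second, a projection argument gives $\delta\le|ab|/2$), and then combines $\delta\le|ab|/\sqrt{2}$ with $|ay|\ge|ab|$ via Proposition~\ref{fact:tri} to conclude $|xy|>|xa|$, contradicting $\overrightarrow{xy}\in Y_4$. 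Your proposed mechanism --- locating a node $u$ on one path whose Yao neighbor must \emph{coincide} with a node of the other path --- is both stronger than what the lemma asserts (only a nonempty intersection is needed, and in general the paths meet in a proper edge crossing, not a shared vertex) and unsupported by any actual distance comparison; as written it restates the difficulty rather than resolving it.
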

\begin{proof}
The proof consists of two parts showing that the following claims hold:
(i) $d \in Q_2(a)$ and (ii) $\Pa_R(d \rightarrow a)$ does not cross $ab$.

Before we prove these two claims, let us argue that they are sufficient
to prove the lemma. Lemma~\ref{lem:same.quadrant} and (i) imply that
$\Pa_R(a \rightarrow d)$ cannot cross $cd$. As a result,
$\Pa_R(a \rightarrow d)$ intersects the left side of the rectangle $R(d,a)$.
Consider the last edge $\overrightarrow{xy}$ of the path
$\Pa_R(d \rightarrow a)$. If this edge crosses the right side of $R(a,d)$,
then (ii) implies that $y$ is in the wedge bounded by $ab$ and the upwards
vertical ray starting at $a$; this implies that $|ay| < |ab|$, contradicting
the fact that $\overrightarrow{ab}$ is an edge in $Y_4$. Therefore,
$\overrightarrow{xy}$ intersects the bottom side of $R(d,a)$, and the lemma
follows (see Figure~\ref{fig:PRcross}b).

To prove the first claim (i), we observe that the assumptions in the lemma imply
that $d \in Q_1(a) \cup Q_2(a)$. Therefore, it suffices to prove that $d$
is not in $Q_1(a)$. Assume to the contrary that $d \in Q_1(a)$. Since
$c \in Q_1(a)$, it must be that $b \in Q_2(c)$; otherwise, $\angle{acb} \ge \pi/2$,
which implies $|ab| > |ac|$, contradicting the fact that
$\overrightarrow{ab} \in Y_4$. Let $i$ and $j$ be the
intersection points between $cd$ and $\partial D(a,|ab|)$, where $i$ is to
the left of $j$.
Since $\angle{dbc} \geq \angle{ibj} > \pi/2$, we have $|cb| < |cd|$. This,
together with the fact that $b$ and $d$ are in the same quadrant $Q_2(c)$,
contradicts the assumption that $\overrightarrow{cd}$ is an edge in $Y_4$.
This completes the proof of claim (i).

Next we prove claim (ii) by contradiction.
Thus, we assume that there is an edge $\overrightarrow{xy}$ on the path
$\Pa_R(d \rightarrow a)$ that crosses $ab$. Then necessarily
$x \in R(a, d)$ and $y \in Q_1(a) \cup Q_4(a)$.
If $y \in Q_4(a)$, then $\angle{xay} > \pi/2$, meaning that
$|xy| > |xa|$, a contradiction to the fact that $\overrightarrow{xy} \in Y_4$.
Thus, it must be that $y \in Q_1(a)$, as in Figure~\ref{fig:PRcross}a.
This implies that $|ab| \le |ay|$, because $\overrightarrow{ab} \in Y_4$.

\begin{figure}[htbp]
\centering
\includegraphics[width=0.65\linewidth]{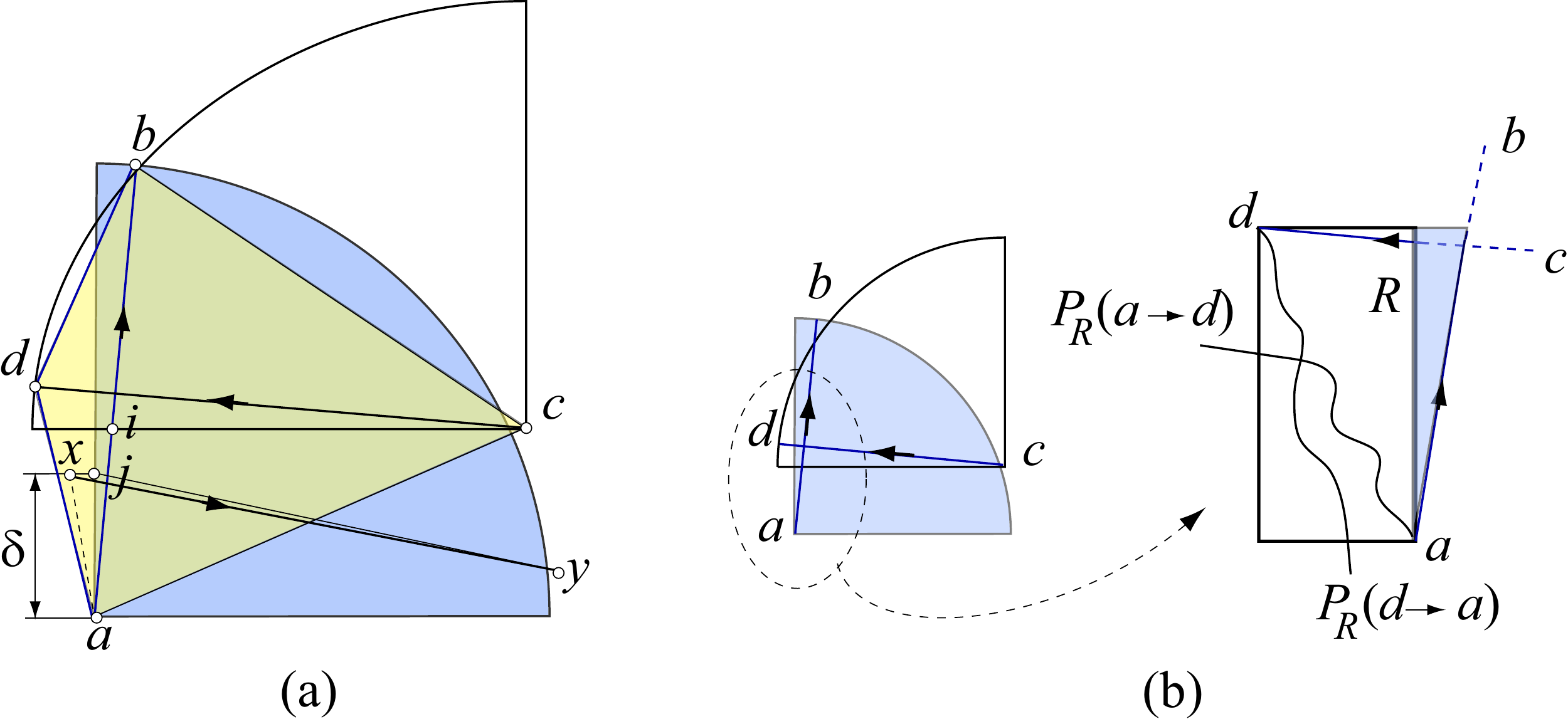}
\caption{(a) Lemma~\ref{lem:ec1}: $xy \in \Pa_R(d \rightarrow a)$ cannot cross $ab$.}
\label{fig:PRcross}
\end{figure}

The contradiction to our assumption that $\overrightarrow{xy}$ crosses $ab$
will be obtained by proving that $|xy| > |xa|$. Indeed, this inequality
contradicts the fact that $\overrightarrow{xy} \in Y_4$.

Let $\delta$ be the distance from $x$ to the horizontal line through $a$.
Our intermediate goal is to show that
\begin{equation}
\delta \le |ab|/\sqrt{2} .
\label{eq:d}
\end{equation}
We claim that $\angle{acb} < \pi/2$. Indeed, if this is not the case,
then $|ac|<|ab|$, contradicting the fact that $\overrightarrow{ab}$ is an
edge in $Y_4$. By a similar argument, and using the fact that
$\overrightarrow{cd}$ is an edge in $Y_4$, we obtain the inequality
$\angle{cbd} < \pi/2$.
We now consider two cases, depending on the relative lengths of $ac$ and $cb$.

\begin{enumerate}

\item

Assume first that $|ac| > |cb|$.
If $\angle{cad} \geq \pi/2$, then $|cd| \geq |ac| > |cb|$, contradicting
the fact that $\overrightarrow{cd}$ is an edge in $Y_4$ (recall that $b$
and $d$ are in the same quadrant of $c$). Therefore, we have
$\angle{cad} < \pi/2$.
Thus far we have established that three angles of the convex quadrilateral
$acbd$ are acute. It follows that the fourth one ($\angle{adb}$) is obtuse.
Proposition~\ref{fact:tri}
applied to $\triangle adb$ tells us that
\begin{equation*}
 |ab|^2  >  |ad|^2 + |db|^2 \geq 2|ad|^2 ,
\end{equation*}
where the latter inequality follows from the assumption that $ad$ is a
shortest side of $acbd$ (and, therefore, $|db| \geq |ad|$).
Thus, we have that $|ad| \le |ab|/\sqrt{2}$. This along with the fact that
$x \in R(a, d)$ implies inequality~(\ref{eq:d}).

\item
Assume now that $|ac| \le |cb|$. Let $i$ be the intersection
point between $ab$ and the horizontal line through $c$
(refer to Figure~\ref{fig:PRcross}a). Note that
$\angle{aic} \ge \pi/2$ and $\angle{bic} \le \pi/2$
(these two angles sum to $\pi$).
This along with
Proposition~\ref{fact:tri}
applied to triangle $\triangle aic$ shows that
\begin{equation*}
 |ac|^2  \ge  |ai|^2 + |ic|^2 .
\end{equation*}
Similarly,
Proposition~\ref{fact:tri}
applied to triangle $\triangle bic$ shows that
\begin{equation*}
|bc|^2 \le |bi|^2 + |ic|^2 .
\end{equation*}
The two inequalities above along with our assumption that $|ac| \le |cb|$
imply that $|ai| \le |bi|$, which in turn implies that $|ai| \le |ab|/2$,
because $|ai|+|ib|=|ab|$. Since $x$ is below $i$ (otherwise, $|cx| < |cd|$,
contradicting the fact that $\overrightarrow{cd}$ is an edge in $Y_4$),
we have $\delta \leq |ai|$. It follows that $\delta \leq |ab|/2$. 
\end{enumerate}

Finally we derive a contradiction using the now established inequality~(\ref{eq:d}).
Let $j$ be the orthogonal projection of $x$ onto the vertical line through
$a$ (thus $|aj| = \delta$).
Note that $\angle{ajy} < \pi/2$, because $y \in Q_4(x)$.
By
Proposition~\ref{fact:tri}
applied to $\triangle ajy$, we have
\begin{equation*}
 |ay|^2  <  |aj|^2 + |jy|^2 = \delta^2 + |jy|^2 .
\end{equation*}
Since $y$ and $b$ are in the same quadrant of $a$, and since
$\overrightarrow{ab} \in Y_4$, we have that $|ab| \le |ay|$.
This along with the inequality above and~(\ref{eq:d}) implies that
$|jy| \geq |ab|/\sqrt{2} \geq \delta$. By
Proposition~\ref{fact:tri}
applied to $\triangle xjy$,
we have
$|xy|^2 > |xj|^2 + |jy|^2 \geq |xj|^2 + \delta^2 = |xj|^2 + |ja|^2 = |xa|^2$. It
follows that $|xy| > |xa|$, contradicting our assumption that
$\overrightarrow{xy} \in Y_4$.
\end{proof}

\begin{lemma}
Let $a, b, c, d$ be four distinct nodes in $V$, with $c \in Q_1(a)$, such that
\begin{enumerate}
\item [(a)] $\overrightarrow{ab} \in Q_1(a)$ and $\overrightarrow{cd} \in Q_3(c)$ are two edges
in $Y_4$ that cross each other.
\item [(b)] $ad$ is a shortest side of the quadrilateral $acbd$.
\end{enumerate}
Then $\Pa_R(d \rightarrow a)$ does not cross $ab$.
\label{lem:c1}
\end{lemma}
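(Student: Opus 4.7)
The plan is to argue by contradiction, following the blueprint of Lemma~\ref{lem:ec1}, part~(ii), with adjustments for the new orientation $\overrightarrow{cd} \in Q_3(c)$. Suppose some edge $\overrightarrow{xy}$ of $\Pa_R(d \rightarrow a)$ crosses $ab$. First I would rule out $d \in Q_1(a)$: if $d \in Q_1(a)$, then $b$ and $d$ both lie in $Q_1(a)$, and $\overrightarrow{ab} \in Y_4$ forces $|ab| \le |ad|$, contradicting $|ad| < |ab|$ from Lemma~\ref{lem:quad}(ii) (which applies since $ad$ is a shortest side and $ab$ a diagonal of the crossing quadrilateral $acbd$). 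So $d \in Q_2(a) \cup Q_3(a) \cup Q_4(a)$.

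The subcase $d \in Q_3(a)$ is immediate: $R(d,a)$ lies in $\overline{Q_3(a)}$, so at every vertex $x$ of the path $a \in Q_1(x)$; for $\overrightarrow{xy}$ to cross $ab \subseteq Q_1(a)$ one forces $y_1 > a_1 \ge x_1$ and $y_2 > a_2 \ge x_2$, yielding $|xy|^2 > |xa|^2$ by direct expansion---a contradiction since $\overrightarrow{xy} \in Y_4$ and $a$ competes in $Q_1(x)$. For the subcases $d \in Q_2(a)$ and $d \in Q_4(a)$ (related by reflection across the line $y=x$ through $a$), the crossing still forces $y \in Q_1(a)$, so $|ab| \le |ay|$. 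Let $\delta$ denote the distance from $x$ to the horizontal line through $a$ if $d \in Q_2(a)$, and to the vertical line through $a$ if $d \in Q_4(a)$; the heart of the matter is to prove $\delta \le |ab|/\sqrt{2}$. Once this is in hand, Proposition~\ref{fact:tri} applied to $\triangle ajy$ and then $\triangle xjy$ (where $j$ is the foot of the perpendicular from $x$ to the other axis through $a$) yields $|xy| > |xa|$, the desired contradiction, exactly as at the end of Lemma~\ref{lem:ec1}'s proof.

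To show $\delta \le |ab|/\sqrt{2}$ I would split on $|ac| > |cb|$ versus $|ac| \le |cb|$, mirroring Lemma~\ref{lem:ec1}. In the first case, the goal is to show three angles of $acbd$ are acute, making the fourth, $\angle adb$, obtuse: $\angle acb < \pi/2$ follows from $\overrightarrow{ab} \in Y_4$ with $c \in Q_1(a)$ giving $|ab| \le |ac|$; $\angle cad < \pi/2$ follows from $\overrightarrow{cd} \in Y_4$ with $a \in Q_3(c)$ giving $|cd| \le |ac|$ (an angle $\ge \pi/2$ would force $|cd| > |ac|$); and $\angle cbd < \pi/2$ by an analogous argument when $b \in Q_3(c)$. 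Proposition~\ref{fact:tri} on $\triangle adb$ together with $|bd| \ge |ad|$ then yields $|ab|^2 > 2|ad|^2$, so $\delta \le |ad| \le |ab|/\sqrt{2}$ via $x \in R(d,a)$. In the second case, let $i$ be the intersection of $ab$ with the line through $c$ parallel to the axis defining $\delta$: Proposition~\ref{fact:tri} on $\triangle aic$ and $\triangle bic$ gives $|ai| \le |ab|/2$, and $x$ lies automatically on the far side of that line from $c$ (because $d \in Q_3(c)$ places the interior of $R(d,a)$ below or to the left of $c$'s line), so $\delta \le |ai| \le |ab|/2$. The main obstacle is the step $\angle cbd < \pi/2$ in the first subcase: since $b$ need not lie in $Q_3(c)$, one must handle $b \notin Q_3(c)$ separately---either by further splitting on the quadrant of $b$ relative to $c$, or by exploiting the $Y_4$ edge out of $b$ in the quadrant containing $d$, to reinstate the required acute-angle bound or reduce to an already-handled case.
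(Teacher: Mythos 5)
Your plan transplants the proof of Lemma~\ref{lem:ec1}(ii), but it breaks exactly where you flag it, and that break is a genuine gap rather than a routine verification. In the subcase $|ac| > |cb|$ you need $\angle{cbd} < \pi/2$ so that $\angle{adb}$ is the obtuse angle of $acbd$. In Lemma~\ref{lem:ec1} that inequality is free because there $b$ and $d$ both lie in $Q_2(c)$, so $\overrightarrow{cd} \in Y_4$ forces $|cd| \le |cb|$, and an obtuse $\angle{cbd}$ would contradict it. Here $d \in Q_3(c)$ while $b$ lies above $c$ (one can show $c$ must lie below $b$), so $b \notin Q_3(c)$: there is no $Y_4$ competition between $b$ and $d$ at $c$, no bound $|cd| \le |cb|$, and hence no way to run that argument. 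Knowing only $\angle{dac} < \pi/2$ and $\angle{acb} < \pi/2$ gives $\angle{cbd} + \angle{bda} > \pi$, which does not tell you which of the two is obtuse. Since your entire contradiction $|xy| > |xa|$ rests on $\delta \le |ab|/\sqrt{2}$, and that rests on this missing angle bound, the proof is incomplete; ``further splitting on the quadrant of $b$'' and ``exploiting the $Y_4$ edge out of $b$'' are placeholders, not arguments. (Smaller points: in the subcase $|ac| \le |cb|$ you need the line through $c$ to actually meet the segment $ab$, i.e., $c$ below $b$, which you never establish; and your $d \in Q_3(a)$ subcase is moot, since that case is impossible --- $\angle{cad} > \pi/2$ would give $|ca| < |cd|$ with $a, d \in Q_3(c)$, contradicting $\overrightarrow{cd} \in Y_4$.)

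The paper avoids the $\delta$ computation entirely. After locating $y$ (not below $a$, outside $D(c,|cd|)$, hence to the right of $c$ and in $Q_1(a)$ with $|ay| \ge |ab|$), it observes that $xy$ and $ab$ are themselves two crossing $Y_4$ edges, so Lemma~\ref{lem:quad}(ii) applies to the quadrilateral $xayb$. Using $|by| > |bc|$ (from $\angle{bcy} > \pi/2$) and $|bx| \ge |bd|$ (from Proposition~\ref{fact:quad} on $xcbd$ together with $|cx| \ge |cd|$), every side of $xayb$ other than $xa$ is at least $|ad| \ge |ax|$, so $xa$ is a shortest side and Lemma~\ref{lem:quad}(ii) gives $|xa| < |xy|$, contradicting $\overrightarrow{xy} \in Y_4$. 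If you want to salvage your route you must actually prove the missing angle (or an alternative bound $\delta \le |ab|/\sqrt{2}$); otherwise the self-referential use of Lemma~\ref{lem:quad} is the clean way out.
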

\begin{proof}
\begin{figure}[htbp]
\centering
\includegraphics[width=0.6\linewidth]{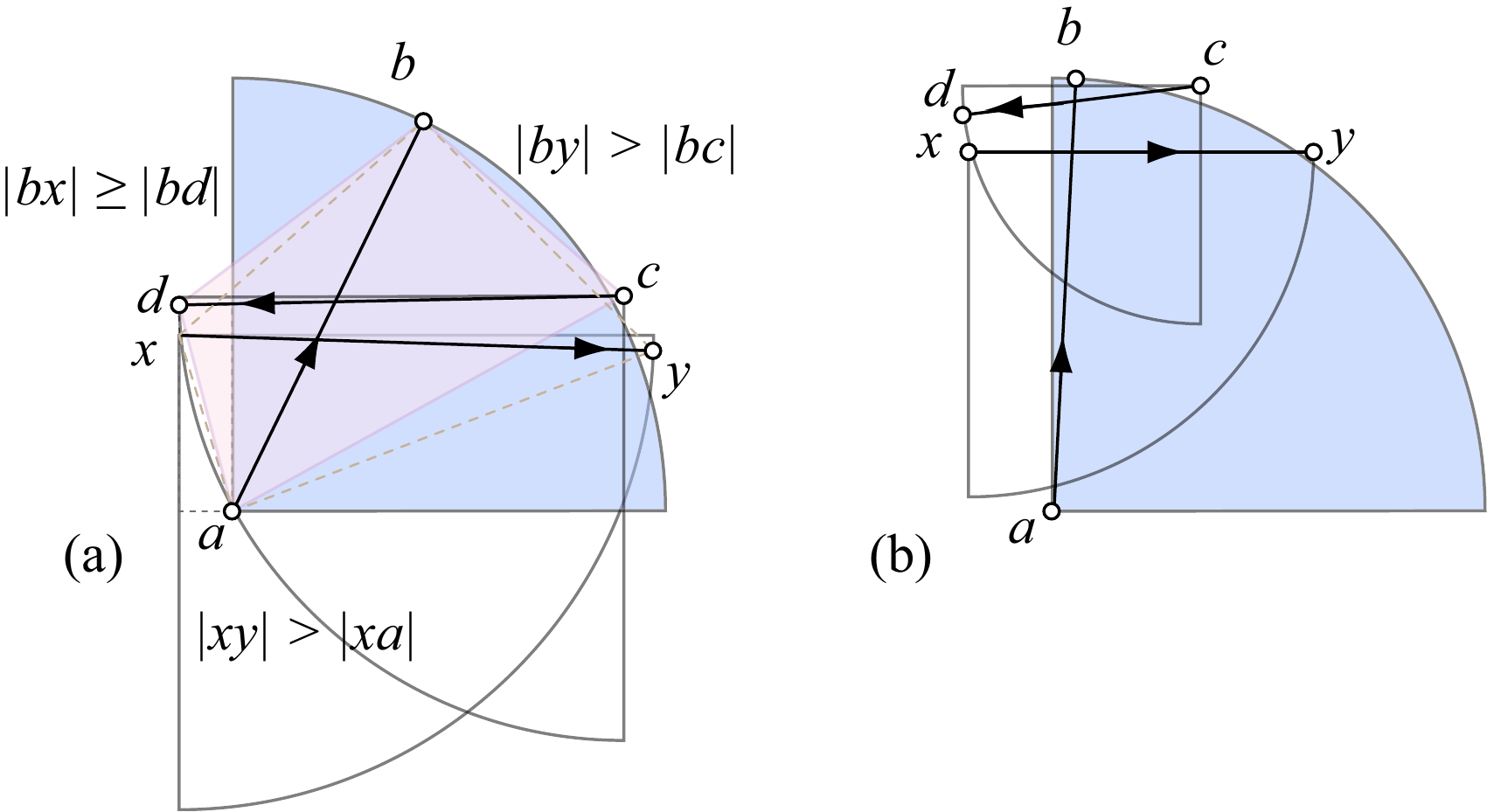}
\caption{Lemma~\ref{lem:c1}: (a)~$\Pa_R(d \rightarrow a)$ does not cross $ab$.
(b)~If $ad$ is not the shortest side of $acbd$, the lemma conclusion might not hold.}
\label{fig:c1}
\end{figure}
%
We first show that $d \notin Q_3(a)$. Assume the opposite. Since $c \in Q_1(a)$
and $d \in Q_3(a)$, we have that $\angle{cad} > \pi/2$. This implies that
$|ca| < |cd|$, which along with the fact that $a, d \in Q_3(c)$ contradict
the fact that $\overrightarrow{cd} \in Y_4$. Also note that $d \notin Q_1(a)$,
since in that case $ab$ and $cd$ could not intersect. In the following we
discuss the case $d \in Q_2(a)$; the case $d \in Q_4(a)$ is symmetric.

A first observation is that $c$ must lie below $b$; otherwise $|cb| < |cd|$
(since $\angle cbd > \pi/2$), which would contradict the fact that
$\overrightarrow{cd} \in Y_4$. We now prove by contradiction that there
is no edge in $\Pa_R(d \rightarrow a)$ crossing $ab$.
Assume the contrary, and let $\overrightarrow{xy} \in \Pa_R(d \rightarrow a)$
be such an edge. Then necessarily $x \in R(a, d)$ and
$\overrightarrow{xy} \in Q_4(x)$. Note that $y$ cannot lie below $a$;
otherwise $|xa| < |xy|$
(since $\angle xay > \pi/2$), which would contradict the fact that
$\overrightarrow{xy} \in Y_4$.
Also $y$ must lie outside $D(c, |cd|) \cap Q(c, d)$, otherwise
$\overrightarrow{cd}$ could not be in $Y_4$. These together show that $y$
sits to the right of $c$. See Figure~\ref{fig:c1}(a).
Then the following inequalities regarding the quadrilateral $xayb$
must hold:
\begin{itemize}
\item[(i)] $|by| > |bc|$, due to the fact that
$\angle{bcy} > \pi/2$.
\item[(ii)] $|bx| \ge |bd|$ ($|bx| = |bd|$ if $x$ and $d$ coincide).
If $x$ and $d$ are distinct, the inequality $|bx| > |bd|$ follows from
the fact that $|cx| \ge |cd|$ (since $x$ is outside $D(c, |cd|)$),
and
Proposition~\ref{fact:quad}
applied to the quadrilateral $xcbd$:
\[
  |bd| + |cx| < |bx| + |cd|
\]
\end{itemize}
Inequalities (i) and (ii) show that $by$ and $bx$ are longer
than sides of the quadrilateral $acbd$, and so they must be longer
than the shortest side of $acbd$, which by assumption~(b) of the lemma
is $ad$:
$\min\{|bx|, |by|\} \ge |ad| \ge |ax|$ (this latter inequality follows
from the fact that $x \in R(d, a)$). Also note that $|ab| \le |ay|$,
since $\overrightarrow{ab} \in Y_4$ and $y$ lies in the same quadrant
of $a$ as $b$. The fact that both diagonals of $xayb$ are in $Y_4$
enables us to apply Lemma~\ref{lem:quad}(ii) to conclude that
$ay$ is not a shortest side of the quadrilateral $xayb$. Thus
$xa$ is a shortest side of the quadrilateral $xayb$, and we can use
Lemma~\ref{lem:quad}(ii) to claim that
\[
|xa| < \min\{|xy|, |ab|\} \le |xy|.
\]
This contradicts our assumption that $\overrightarrow{xy} \in Y_4$.
\end{proof}

\noindent
Figure~\ref{fig:c1}(b) shows that the claim of the lemma might be false without
assumption~(b).
The next lemma relies on all of Lemmas~\ref{lem:PR}--\ref{lem:c1}.

\begin{lemma}
Let $a, b, c, d \in V$ be four distinct nodes such that
$\overrightarrow{ab} \in Y_4$ crosses
$\overrightarrow{cd} \in Y_4$, and let
$xy$ be a shortest side of the quadrilateral $abcd$.
Then there exist two paths $\Pa_x$ and $\Pa_y$ in $Y_4$, where
$\Pa_x$ has $x$ as an endpoint and $\Pa_y$ has $y$ as an endpoint,
with the following properties:
\begin{enumerate}
\item[(a)] $\Pa_x$ and $\Pa_y$ have a nonempty intersection.
\item[(b)] $|\Pa_x| + |\Pa_y| \le 3 \sqrt{2} |xy|$.
\item[(c)] Each edge on $\Pa_x \cup \Pa_y$ is no longer than $|xy|$.
\end{enumerate}
\label{lem:pathcross}
\end{lemma}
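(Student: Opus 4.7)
The natural candidates are $\Pa_x := \Pa_R(x \rightarrow y)$ and $\Pa_y := \Pa_R(y \rightarrow x)$. For these, Lemma~\ref{lem:PR} immediately gives $|\Pa_x| + |\Pa_y| \le 2\sqrt{2}\,|xy| \le 3\sqrt{2}\,|xy|$ and bounds each edge by $|xy|$, so properties (b) and (c) hold automatically. The entire burden of the proof is therefore to verify (a): that these two canonical truncated paths must meet.

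To attack (a), I would first normalize the configuration. Rotate and reflect so that $\overrightarrow{ab} \in Q_1(a)$; relabel the points $\{a,b\}$ and $\{c,d\}$ so that the shortest side is $xy = ad$, making $\Pa_x = \Pa_R(a \rightarrow d)$ and $\Pa_y = \Pa_R(d \rightarrow a)$, both of which live in the common rectangle $R(a,d)$. The crossing condition, combined with $\overrightarrow{ab},\overrightarrow{cd}\in Y_4$ and Proposition~\ref{fact:tri} applied at the apexes $a$ and $c$, restricts $c$ to a small set of possible quadrants relative to $a$ and pins down which quadrant of $c$ contains $d$.

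The case split I would use is on the quadrant of $c$ that contains $d$. When $\overrightarrow{cd} \in Q_2(c)$, the hypotheses of Lemma~\ref{lem:ec1} are met and its conclusion is exactly $\Pa_R(a\rightarrow d)\cap \Pa_R(d\rightarrow a)\neq\emptyset$, finishing that case. When $\overrightarrow{cd} \in Q_3(c)$, Lemma~\ref{lem:c1} tells us that $\Pa_R(d\rightarrow a)$ does not cross $ab$; I would complement this with a symmetric argument, appealing to Lemma~\ref{lem:same.quadrant} (or a direct quadrant-counting argument using $\overrightarrow{cd}\in Y_4$), to show that $\Pa_R(a\rightarrow d)$ cannot cross $cd$. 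Because both paths are $xy$-monotone inside $R(a,d)$, and the crossing diagonals $ab$ and $cd$ split the boundary of $R(a,d)$ into two pairs of arcs, the two non-crossing constraints force the exit points of $\Pa_x$ and $\Pa_y$ onto complementary sides; a Jordan-curve argument then guarantees that the two paths meet inside $R(a,d)$. Any remaining positions of $c$ reduce to one of these two cases by the usual reflection symmetries.

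I expect the hard step to be the Jordan-type argument in the $\overrightarrow{cd}\in Q_3(c)$ subcase: one must verify that the crossing point of $ab$ and $cd$ really lies inside (or on the boundary of) $R(a,d)$ and that the ``does not cross $ab$'' and ``does not cross $cd$'' constraints together pin the exit arcs of the two $xy$-monotone paths to opposite pairs of rectangle sides, even at the boundary corners. The rest of the proof should reduce to systematic quadrant bookkeeping, with Propositions~\ref{fact:quad} and~\ref{fact:tri} invoked to rule out degenerate or impossible subconfigurations.
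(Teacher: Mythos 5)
There is a genuine gap, and it is visible already in the constant: the lemma promises $3\sqrt{2}\,|xy|$, not the $2\sqrt{2}\,|xy|$ your two canonical paths would give, precisely because the pair $\Pa_R(x\rightarrow y)$, $\Pa_R(y\rightarrow x)$ does \emph{not} always intersect. Your plan hinges on showing, in the subcase $\overrightarrow{cd}\in Q_3(c)$ (and likewise when the shortest side is $db$ in the $Q_2(c)$ subcase), that $\Pa_R(a\rightarrow d)$ cannot cross $cd$ ``by a symmetric argument'' or via Lemma~\ref{lem:same.quadrant}. But Lemma~\ref{lem:same.quadrant} only forbids crossings between Yao edges lying in the \emph{same} quadrant index of their origins; here the edges of $\Pa_R(a\rightarrow d)$ lie in $Q_2$ of their origins while $\overrightarrow{cd}\in Q_3(c)$, so that lemma is silent, and no symmetric counterpart of Lemma~\ref{lem:c1} is available. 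The paper explicitly allows an edge $\overrightarrow{xy}\in\Pa_R(a\rightarrow d)$ to cross $cd$ and repairs the construction by appending a \emph{third} truncated path: it sets $\Pa_a=\Pa_R(a\rightarrow d)\oplus\Pa_R(y\rightarrow d)$ and $\Pa_d=\Pa_R(d\rightarrow y)$, then proves $|yd|$ is at most the length of the relevant shortest side so that three applications of Lemma~\ref{lem:PR} give the $3\sqrt{2}$ bound. Without this patch your Jordan-curve argument has nothing to work with, since after crossing $cd$ the truncated path can exit the rectangle on the far side of the diagonal and miss the opposing path entirely.

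A second problem is the normalization ``relabel $\{a,b\}$ and $\{c,d\}$ so that $xy=ad$.'' Yao edges are directed: $\overrightarrow{ab}\in Y_4$ does not give $\overrightarrow{ba}\in Y_4$, so you may not swap $a$ and $b$ to move the shortest side onto the two tails. The configurations where the shortest side is incident to a head (e.g.\ $db$ or $bc$) are genuinely different from the $ad$ case, and the paper treats them by separate arguments: Lemma~\ref{lem:ec1} applies only when the shortest side joins the tail of $\overrightarrow{ab}$ to the tail-side endpoint $d$, whereas the $db$-shortest case is exactly the one requiring the three-path construction above. Your case split on the quadrant of $d$ relative to $c$ is the right skeleton (and matches the paper's), but each quadrant case must be further split according to \emph{which} side of $acbd$ is shortest, and at least one branch in each cannot be handled by the two canonical paths alone.
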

\begin{proof}
Assume without loss of generality that $b \in Q_1(a)$.
We discuss the following exhaustive cases:
\begin{enumerate}
\item $c \in Q_1(a)$, and $d \in Q_1(c)$.
In this case, $ab$ and $cd$ cannot cross each other
(by Lemma~\ref{lem:same.quadrant}), so this case is finished.
\begin{figure}[htbp]
\centering
\includegraphics[width=0.8\linewidth]{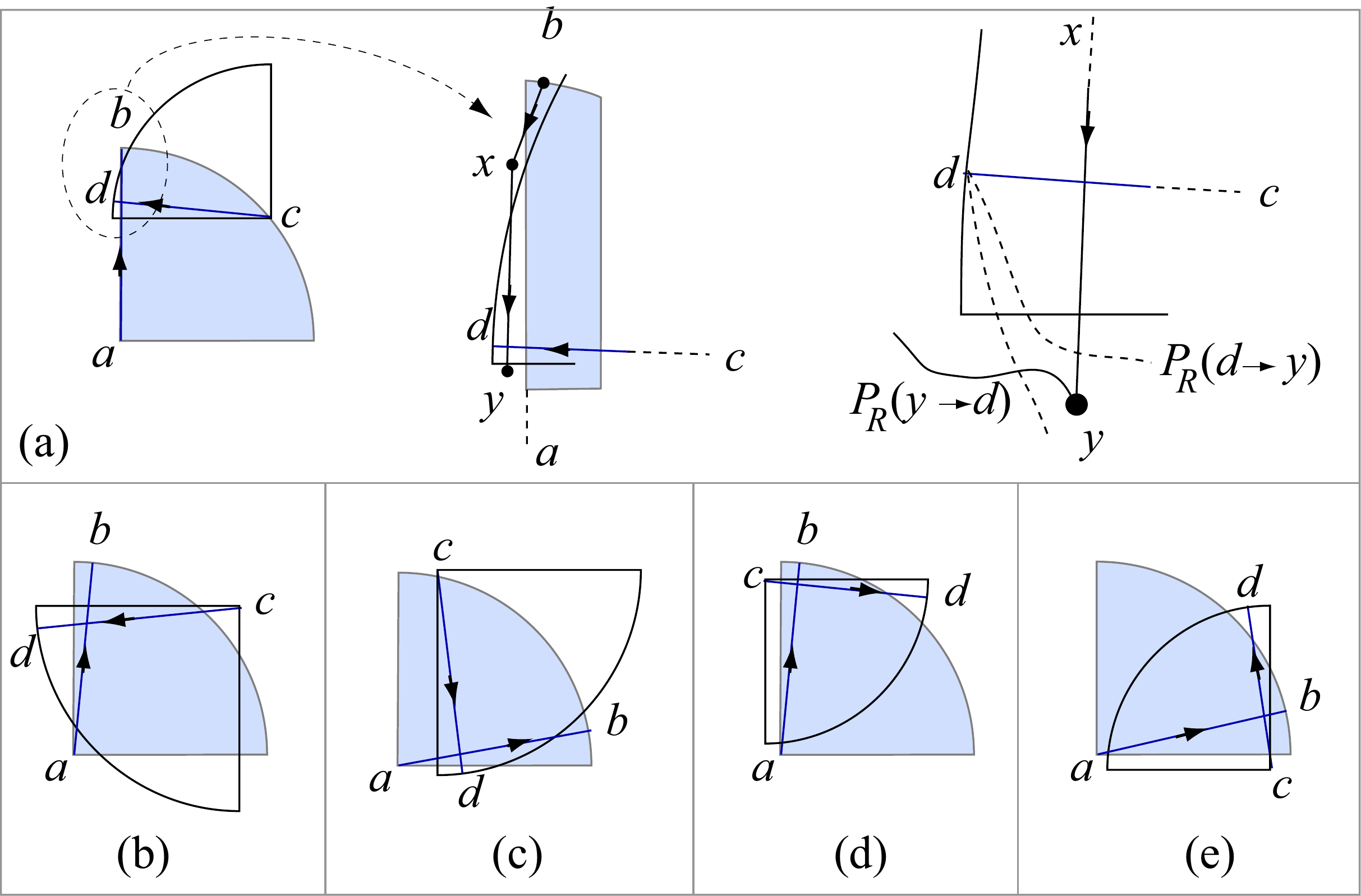}
\caption{Lemma~\ref{lem:pathcross}: (a) $c \in Q_1(a)$, and $d \in Q_2(c)$
(b)$c \in Q_1(a)$, and $d \in Q_3(c)$
(c) $c \in Q_2(a)$ (d) $c \in Q_4(a)$.}
\label{fig:pathcross}
\end{figure}
\item $c \in Q_1(a)$, and $d \in Q_2(c)$, as in
Figure~\ref{fig:pathcross}a. Since $ab$ crosses $cd$, $b \in Q_2(c)$.
Since $\overrightarrow{ab} \in Y_4$, $|ab| \le |ac|$.
Since $\overrightarrow{cd} \in Y_4$, $|cd| \le |cb|$.
These along with Lemma~\ref{lem:quad} imply that $ad$ and $db$ are the only
candidates for a shortest edge of $acbd$.

Assume first that $ad$ is a shortest edge of $acbd$.
By Lemma~\ref{lem:same.quadrant},
$\Pa_a = \Pa_R(a \rightarrow d)$ does not cross $cd$.
It follows from
Lemma~\ref{lem:ec1}
that $\Pa_a$ and
$\Pa_d = \Pa_R(d \rightarrow a)$ have a nonempty intersection.
Furthermore, by Lemma~\ref{lem:PR},
$|\Pa_a| \le |ad| \sqrt{2}$ and
$|\Pa_d| \le |ad| \sqrt{2}$, and no edge on these paths
is longer than $|ad|$, proving the lemma
true for this case.

Consider now the case when $db$
is a shortest edge of $acbd$ (see Figure~\ref{fig:pathcross}a).
Note that $d$ is below $b$ (otherwise, $d \in Q_2(c)$ and
$|cd|>|cb|$) and, therefore, $b \in Q_1(d)$).
By Lemma~\ref{lem:same.quadrant}, $\Pa_d = \Pa_R(d \rightarrow b)$
does not cross $ab$. If $\Pa_b = \Pa_R(b \rightarrow d)$
does not cross $cd$, then $\Pa_b$ and $\Pa_d$ have a nonempty intersection,
proving the lemma true for this case.
Otherwise, there exists $\overrightarrow{xy} \in \Pa_R(b \rightarrow d)$
that crosses $cd$ (see Figure~\ref{fig:pathcross}a). Define
\begin{eqnarray*}
\Pa_b & = & \Pa_R(b \rightarrow d) \oplus \Pa_R(y \rightarrow d) \\
\Pa_d & = & \Pa_R(d \rightarrow y)
\end{eqnarray*}
By Lemma~\ref{lem:same.quadrant}, $\Pa_R(y \rightarrow d)$ does not cross $cd$.
Then $\Pa_b$ and $\Pa_d$ must have a nonempty intersection.
We now show that $\Pa_b$ and $\Pa_d$ satisfy conditions (b) and (c) of the lemma.
%
Proposition~\ref{fact:quad}
applied on the quadrilateral $xdyc$ tells us that
\[
   |xc| + |yd| < |xy| + |cd|
\]
We also have that $|cx| \ge |cd|$, since $\overrightarrow{cd} \in Y_4$ and $x$ is
in the same quadrant of $c$ as $d$. This along with the inequality above implies
$|yd| < |xy|$. Because $xy \in \Pa_R(b \rightarrow d)$, by Lemma~\ref{lem:PR} we
have that $|xy| \le |bd|$, which along with the previous inequality shows that
$|yd| < |bd|$.
This along with Lemma~\ref{lem:PR} shows that condition (c) of the lemma
is satisfied. Furthermore,
$|\Pa_R(y \rightarrow d)| \le |yd|\sqrt{2}$ and
$|\Pa_R(d \rightarrow y)| \le |yd|\sqrt{2}$.
It follows that $|\Pa_b| + |\Pa_d| \le 3\sqrt{2}|bd|$.

\item $c \in Q_1(a)$, and $d \in Q_3(c)$, as in Figure~\ref{fig:pathcross}b.
Then $|ac| \ge \max\{ab, cd\}$, and by Lemma~\ref{lem:quad} $ac$ is not
a shortest edge of $acbd$.
The case when $bd$ is a shortest edge of $acbd$ is settled by
Lemmas~\ref{lem:same.quadrant} and~\ref{lem:PR}: Lemma~\ref{lem:same.quadrant}
tells us that $\Pa_d = \Pa_R(d \rightarrow b)$ does not cross $ab$, and
$\Pa_b = \Pa_R(b \rightarrow d)$ does not cross $cd$. It follows that $\Pa_d$ and
$\Pa_b$ have a nonempty intersection. Furthermore, Lemma ~\ref{lem:PR} guarantees
that $\Pa_d$ and $\Pa_b$ satisfy conditions (b) and (c) of the lemma.

Consider now the case when $ad$ is a shortest edge of $acbd$; the case
when $bc$ is shortest is symmetric.
By
Lemma~\ref{lem:c1},
$\Pa_R(d \rightarrow a)$ does not cross $ab$.
If $\Pa_R(a \rightarrow d)$ does not cross $cd$, then this case is settled:
$\Pa_d = \Pa_R(d \rightarrow a)$ and $\Pa_a = \Pa_R(a \rightarrow d)$ satisfy the three conditions
of the lemma. Otherwise, let $\overrightarrow{xy} \in \Pa_R(a \rightarrow d)$
be the edge crossing $cd$. Arguments similar to the ones used in case 1 above show that
\begin{eqnarray*}
\Pa_a & = & \Pa_R(a \rightarrow d) \oplus \Pa_R(y \rightarrow d) \\
\Pa_d & = & \Pa_R(d \rightarrow y)
\end{eqnarray*}
are two paths that satisfy the conditions of the lemma.

\item $c \in Q_1(a)$, and $d \in Q_4(c)$, as in Figure~\ref{fig:pathcross}c.
Note that a horizontal reflection of Figure~\ref{fig:pathcross}c, followed
by a rotation of $\pi/2$, depicts a case identical to case 1, which has
already been settled.

\item $c \in Q_2(a)$, as in Figure~\ref{fig:pathcross}d.
Note that Figure~\ref{fig:pathcross}d rotated by $\pi/2$ depicts
a case identical to case 1, which has already been settled.

\item $c \in Q_3(a)$. Then it must be that $d \in Q_1(c)$, otherwise
$cd$ cannot cross $ab$. By Lemma~\ref{lem:same.quadrant} however,
$ab$ and $cd$ may not cross, unless one of them is not in $Y_4$.

\item $c \in Q_4(a)$, as in Figure~\ref{fig:pathcross}e.
Note that a vertical reflection of Figure~\ref{fig:pathcross}e depicts
a case identical to case 1, so this case is settled as well.
\end{enumerate}
Having exhausted all cases, we conclude that the lemma holds.
\end{proof}

\noindent
We are now ready to establish the main lemma of this section, showing
that there is a short path between the endpoints of two intersecting
edges in $Y_4$.

\begin{lemma}
Let $a, b, c, d \in V$ be four distinct nodes such that
$\overrightarrow{ab} \in Y_4$ crosses
$\overrightarrow{cd} \in Y_4$, and let
$xy$ be a shortest side of the quadrilateral $abcd$.
Then $Y_4$ contains a path $p(x,y)$ connecting $x$ and $y$, of length
\[
    |p(x,y)| \le \frac{6}{\sqrt{2}-1} \cdot |xy|.
\]
Furthermore, no edge on $p(x,y)$ is longer than $|xy|$.
\label{lem:recross}
\end{lemma}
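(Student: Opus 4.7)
The plan is to bootstrap from Lemma~\ref{lem:pathcross}: applying it yields two paths $\Pa_x, \Pa_y$ in $Y_4$ emanating from $x$ and $y$, with a nonempty intersection, total length at most $3\sqrt{2}\,|xy|$, and individual edges of length at most $|xy|$. If the intersection contains an actual node $v\in V$, then concatenating $\Pa_x[x,v]$ with $\Pa_y[v,y]$ already yields a walk in $Y_4$ of length at most $3\sqrt{2}\,|xy|$, well below the target bound, and that case is finished. The substantive case is when $\Pa_x$ and $\Pa_y$ meet only at a crossing of an edge $\overrightarrow{uv}\in\Pa_x$ with an edge $\overrightarrow{wz}\in\Pa_y$.

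I would handle that case by strong induction on $|xy|$, which is well-founded because $V$ supplies only finitely many pairwise distances and the subproblem produced below has strictly smaller shortest side. Since $\overrightarrow{uv}$ and $\overrightarrow{wz}$ are themselves two crossing $Y_4$ edges of length at most $|xy|$, Lemma~\ref{lem:quad}(i) gives that the shortest side $x'y'$ of the quadrilateral with diagonals $uv$ and $wz$ satisfies $|x'y'|\le\max(|uv|,|wz|)/\sqrt{2}\le|xy|/\sqrt{2}$, and Lemma~\ref{lem:quad}(ii) ensures $x'y'$ is genuinely a side rather than a diagonal, so $\{x',y'\}$ consists of one endpoint from $\{u,v\}$ and one from $\{w,z\}$. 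The induction hypothesis then supplies a $Y_4$-path $p(x',y')$ of length at most $\frac{6}{\sqrt{2}-1}|x'y'|$ whose edges are each of length at most $|x'y'|$.

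Splicing these pieces, I would set
\[
p(x,y)\;:=\;\Pa_x[x,x']\;\oplus\;p(x',y')\;\oplus\;\Pa_y[y',y],
\]
where the prefix of $\Pa_x$ terminates at $x'$ (including or excluding the crossing edge $\overrightarrow{uv}$ according as $x'=v$ or $x'=u$), and similarly for the suffix of $\Pa_y$. Its length satisfies
\[
|p(x,y)|\;\le\;|\Pa_x|+|\Pa_y|+|p(x',y')|\;\le\;3\sqrt{2}\,|xy|+\frac{6}{\sqrt{2}-1}\cdot\frac{|xy|}{\sqrt{2}},
\]
and the algebraic identity $3\sqrt{2}+\frac{6}{2-\sqrt{2}}=\frac{6}{\sqrt{2}-1}$ makes the right-hand side exactly $\frac{6}{\sqrt{2}-1}|xy|$, as required. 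Edge lengths stay within $|xy|$ because edges of $\Pa_x,\Pa_y$ are bounded by $|xy|$ via Lemma~\ref{lem:pathcross}(c), and edges of $p(x',y')$ are bounded by $|x'y'|\le|xy|/\sqrt{2}$ by the induction hypothesis.

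The hard part is the geometric-series bookkeeping and, more delicately, confirming the combinatorial setup of the recursion: one has to be sure the crossing point of $\Pa_x$ and $\Pa_y$ identifies a quadrilateral $uvwz$ to which Lemma~\ref{lem:quad} genuinely applies, and that the recursion is strictly descending so the induction terminates. The constant $\frac{6}{\sqrt{2}-1}$ is precisely the fixed point of the recurrence $f(\ell)=3\sqrt{2}\,\ell+f(\ell/\sqrt{2})$, equivalently the sum of the geometric series $\sum_{i\ge 0}3\sqrt{2}\,(1/\sqrt{2})^i\,|xy|$; this is why the contraction factor $1/\sqrt{2}$ from Lemma~\ref{lem:quad}(i) is exactly the right ingredient to make the whole argument close.
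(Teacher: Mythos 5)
Your proposal is correct and follows essentially the same route as the paper's own proof: invoke Lemma~\ref{lem:pathcross} to get the two intersecting paths, handle the shared-node case directly, and otherwise recurse on the shortest side $x'y'$ of the quadrilateral formed by the crossing edges, using the $1/\sqrt{2}$ contraction from Lemma~\ref{lem:quad} and the identity $3\sqrt{2}+\frac{6}{2-\sqrt{2}}=\frac{6}{\sqrt{2}-1}$ to sum the geometric series. The paper phrases the termination argument as induction on $|xy|$ with base case $x=y$, which matches your well-foundedness concern.
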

\begin{proof}
Let $\Pa_x$ and $\Pa_y$ be the two paths whose existence in $Y_4$ is
guaranteed by Lemma~\ref{lem:pathcross}. By condition (c) of
Lemma~\ref{lem:pathcross}, no edge on $\Pa_x$ and $\Pa_y$ is longer
than $|xy|$. By condition (a) of Lemma~\ref{lem:pathcross}, $\Pa_x$ and
$\Pa_y$ have a nonempty intersection. If $\Pa_x$ and $\Pa_y$
share a node $u \in V$, then the path
\[
p(x, y) = \Pa_x[x, u] \oplus \Pa_y[y,u]
\]
is a path from $x$ to $y$ in $Y_4$ no longer than $3\sqrt{2}|xy|$;
the length restriction follows from guarantee (b) of Lemma~\ref{lem:pathcross}.
Otherwise, let $\overrightarrow{a'b'} \in \Pa_x$ and
$\overrightarrow{c'd'} \in \Pa_y$ be two edges crossing
each other. Let $x'y'$ be a shortest side of the quadrilateral
$a'c'b'd'$, with $x' \in \Pa_x$ and $y' \in \Pa_y$.
Lemma~\ref{lem:pathcross} tells us that $|a'b'| \le |xy|$ and $|c'd'| \le |xy|$.
These along with Lemma~\ref{lem:quad} imply that
\begin{equation}
|x'y'| \le |xy| / \sqrt{2}.
\label{eq:rq}
\end{equation}
This enables us to derive a recursive formula for computing a path
$p(x,y) \in Y_4$ as follows:
\begin{equation}
p(x, y) =
\begin{cases}
x, & \text{if $x = y$} \\
\Pa_x[x,x'] \oplus \Pa_y[y, y'] \oplus p(x', y'), & \text{if $x \neq y$}
\label{eq:pab}
\end{cases}
\end{equation}
Next we use induction on the length of $xy$ to prove the claim of the lemma.
The base case corresponds to $x=y$, case in which $p(x,y)$ degenerates to a point
and $|p(x,y)| = 0$.
To prove the inductive step, pick a shortest side $xy$ of a quadrilateral $acbd$, with
$\overrightarrow{ab}, \overrightarrow{cd} \in Y_4$ crossing each other, and assume that
the lemma holds for all such sides shorter than $xy$.
Let $p(x,y)$ be the path determined recursively as in~(\ref{eq:pab}).
By the inductive hypothesis, we have that $p(x',y')$ contains no edges longer
than $|x'y'| \le |xy|$, and
\[
    |p(x',y')| \le \frac{6}{\sqrt{2}-1} |x'y'| \le \frac{6}{2 - \sqrt{2}}|xy| .
\]
This latter inequality follows from~(\ref{eq:rq}). This along with Lemma~\ref{lem:pathcross}
and formula~(\ref{eq:pab}) implies
\[
    |p(x,y)| \le (3\sqrt{2} + \frac{6}{2-\sqrt{2}})\cdot|xy| = \frac{6}{\sqrt{2}-1}\cdot|xy|.
\]
This completes the proof.
\end{proof}

\section{$Y^\infty_4$ and $Y_4$}
\label{sec:Yinf+Y4}
We prove that every individual edge of $Y^\infty_4$ is spanned by a
short path in $Y_4$. This, along with the result of
Theorem~\ref{thm:y4infspanner}, establishes that $Y_4$ is a spanner.

Fix an edge $\overrightarrow{ab} \in Y_4^\infty$. Define an edge or
a path as \emph{short} if its length is within a constant factor
of $|ab|$. In our proof that $ab$ is spanned by a short path in
$Y_4$, we will make use of the following three statements
(which will be proved in the appendix).
\begin{description}
\item[S1]
If $ab$ is short, then $\Pa_R(a \rightarrow b)$,
and therefore its reverse, $\Pa_R^{-1}(a \rightarrow b)$,
are short by Lemma~\ref{lem:PR}.
\item[S2]
If $ab \in Y_4$ and $cd \in Y_4$ are short, and if
$ab$ intersects $cd$, Lemma~\ref{lem:recross} shows that
then there is a short path between any two of the endpoints
of these edges.
\item[S3]
If $p(a,b)$ and $p(c,d)$ are short paths that intersect,
then there is a short path $P$ between any two of the endpoints
of these paths, by {\bf S2}.
\end{description}

\begin{lemma}
For any edge $ab \in Y_4^\infty$, there is a short path $p(a,b) \in Y_4$ of length
\[|p(a,b)| \le (29 + 23\sqrt{2})|ab|.\]
\label{lem:lool2short}
\end{lemma}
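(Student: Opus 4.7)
The plan is to build a short path from $a$ to $b$ in $Y_4$ by stitching together the two canonical ``directed toward the other endpoint'' paths and repairing any gap between them with Lemma~\ref{lem:recross}. Specifically, I would consider the two paths $\Pa_R(a \rightarrow b)$ and $\Pa_R(b \rightarrow a)$. Statement S1 (Lemma~\ref{lem:PR}) tells us each one lies inside the rectangle $R(a,b)$, has total length at most $\sqrt{2}\,|ab|$, and is composed of edges of length at most $|ab|$. So these two paths already account for a contribution of at most $2\sqrt{2}\,|ab|$ toward the final bound; the question is how to connect them.

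First I would dispose of the trivial sub-cases: if $\Pa_R(a\rightarrow b)$ actually reaches $b$ (or symmetrically $\Pa_R(b\rightarrow a)$ reaches $a$) we are immediately done, and if the two paths share a common node $u$, concatenating the two subpaths at $u$ produces an $a$-to-$b$ path in $Y_4$ of length at most $2\sqrt{2}\,|ab|$. The interesting case is when neither path terminates at the far endpoint and the two paths share no node. In this case I would argue, using the fact that $ab\in Y_4^\infty$ forces the $L_\infty$-square $S(a,b)$ to be empty of nodes, that some edge $\overrightarrow{a'b'}$ of $\Pa_R(a\rightarrow b)$ must cross some edge $\overrightarrow{c'd'}$ of $\Pa_R(b\rightarrow a)$. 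Emptiness of $S(a,b)$ rules out the pathological escape route in which the two $xy$-monotone paths drift past each other on opposite sides of the diagonal without interacting.

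Once a pair of crossing $Y_4$-edges is exhibited, I would apply Lemma~\ref{lem:recross} (statement S2) to obtain a $Y_4$-path $p(x',y')$ between the endpoints of the shortest side $x'y'$ of the quadrilateral $a'c'b'd'$, with $|p(x',y')| \le \tfrac{6}{\sqrt{2}-1}|x'y'|$. By Lemma~\ref{lem:quad}(i), $|x'y'|$ is at most $1/\sqrt{2}$ times the longer diagonal of that quadrilateral, and each diagonal is an edge of $\Pa_R(\cdot\rightarrow\cdot)$, hence of length at most $|ab|$. Composing via S3 yields the desired path
\[
 \Pa_R(a\rightarrow b)[a,x'] \;\oplus\; p(x',y') \;\oplus\; \Pa_R(b\rightarrow a)[b,y']^{-1},
\]
whose length is bounded by the sum of the two $\sqrt{2}|ab|$ terms and the $\tfrac{6}{\sqrt{2}-1}|x'y'|$ term.

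The main obstacle is twofold. First, one must prove that the two paths really do cross (or share a node) whenever neither reaches the opposite endpoint; this is the step that genuinely requires the hypothesis $ab\in Y_4^\infty$, since a purely $L_2$-based argument about $\Pa_R$ alone does not force interaction. Second, the asserted constant $29+23\sqrt{2}$ is considerably larger than the naive $6+5\sqrt{2}$ one gets from a single application of the scheme above, which strongly suggests that the crossing-repair step may have to be invoked more than once — e.g., when the repair path $p(x',y')$ itself crosses $\Pa_R(a\rightarrow b)$ or $\Pa_R(b\rightarrow a)$, or when the terminal edges of the two $\Pa_R$ paths need an additional short connector using S3. Tracking these nested applications, each picking up another factor of roughly $\tfrac{6}{\sqrt{2}-1}$ scaled by a geometric ratio from Lemma~\ref{lem:quad}, is where the bookkeeping behind the constant $29+23\sqrt{2}$ would live, and I would expect that to be the most delicate part of the proof to make precise.
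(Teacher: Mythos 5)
There is a genuine gap at the heart of your plan: the claim that, when neither of $\Pa_R(a \rightarrow b)$ and $\Pa_R(b \rightarrow a)$ reaches the far endpoint and they share no node, some edge of one must cross some edge of the other. This is false, and the emptiness of $S(a,b)$ works \emph{against} you rather than for you. Since the interior of $R(a,b)$ is contained in the open square $S(a,b)$, which is empty of nodes because $\overrightarrow{ab} \in Y_4^\infty$, each of the two truncated paths consists of a \emph{single} edge that immediately leaves $R(a,b)$: $\Pa_R(a \rightarrow b)$ is just the $Y_4$ edge $\overrightarrow{ac}$ out of $a$ in $Q(a,b)$, and $\Pa_R(b \rightarrow a)$ is just the $Y_4$ edge out of $b$ in the opposite quadrant. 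These two edges can exit through the top and bottom sides of $R(a,b)$ respectively (e.g., $a=(0,0)$, $b=(10,9)$, $c=(1,11)$, $c'=(9,-1)$: here $\overrightarrow{ab}\in Y_4^\infty$, $\overrightarrow{ac},\overrightarrow{bc'}\in Y_4$, and $ac$ and $bc'$ neither meet nor cross). So the interaction you need cannot be forced this way, and Lemma~\ref{lem:recross} never gets a crossing pair to work with.

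The paper's proof takes a different route precisely because of this. It anchors on the $Y_4$ edge $\overrightarrow{ac}$ in the quadrant of $b$, observes that $|ac|\le|ab|$ and hence $bc$ is short, and then follows $\Pa_R(b \rightarrow c)$ (toward $c$, not toward $a$), repairing the construction through a cascade of cases: whether $\Pa_R(b \rightarrow c)$ meets $ac$; if not, whether $\Pa_R(b' \rightarrow a)$ (from the exit point $b'$) meets $ab$ or $ac$; and in the hardest case, whether an edge $\overrightarrow{de}$ of $\Pa_R(b' \rightarrow a)$ crosses $ab$, which requires a separate argument that $\Pa_R(e \rightarrow a)$ cannot recross $ab$ so that $\Pa_R(a \rightarrow e)$ is forced to meet it. Each case needs its own geometric lemma to \emph{establish} the required intersection before {\bf S2}/{\bf S3} can be applied, and the constant $29+23\sqrt{2}$ accumulates from concatenating three or four such short pieces (with $|ae|\le 4|ab|$, $|ab'|\le 3|ac|$, etc.), not from nested invocations of the crossing-repair recursion as you conjecture. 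Your intuition that the stated constant signals extra structure is right, but the missing content is the case analysis that manufactures intersecting short paths in the first place.
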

\begin{proof}
For the sake of clarity, we only prove here that there is a short path $p(a,b)$, and defer
the calculations of the actual stretch factor of $p(a,b)$ to the appendix.
Assume without loss of generality that $\overrightarrow{ab} \in Y_4^\infty$, and
$\overrightarrow{ab} \in Q_1(a)$. If $\overrightarrow{ab} \in Y_4$, then $p(a,b) = ab$
and the proof is finished. So assume the opposite, and let
$\overrightarrow{ac} \in Q_1(a)$ be the edge in $Y_4$; since $Q_1(a)$ is nonempty,
$\overrightarrow{ac}$ exists.
Because $\overrightarrow{ac} \in Y_4$ and $b$ is in the same quadrant of $a$
as $c$, we have that
\begin{eqnarray}
\nonumber |ac| & \le & |ab| ~~~~~~~~~~~~~\mbox{(i)}\\
          |bc| & \le & |ac|\sqrt{2}~~~~~~~~~\mbox{(ii)}
\label{eq:o1}
\end{eqnarray}
Thus both $ac$ and $bc$ are short.
And this in turn implies that $\Pa_R(b \rightarrow c)$ is short by {\bf S1}.
We next focus on $\Pa_R(b \rightarrow c)$.
Let $b'\notin R(b,c)$ be the other endpoint of $\Pa_R(b \rightarrow c)$.
We distinguish three cases.

\begin{figure}[hp]
\centering
\includegraphics[width=0.8\linewidth]{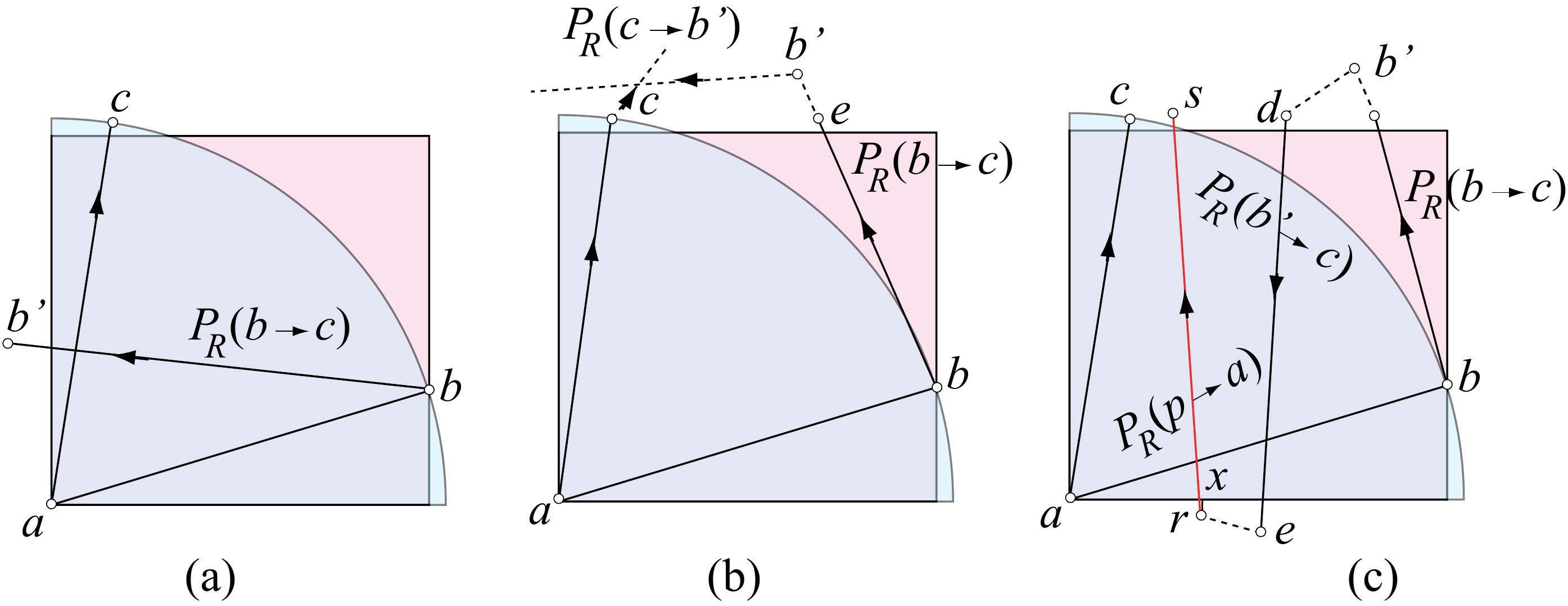}
\caption{Lemma~\ref{lem:lool2short}: (a) Case 1: $\Pa_R(b \rightarrow c)$ and $ac$ have a nonempty intersection.
(b) Case 2: $\Pa_R(b' \rightarrow a)$ and $ab$ have an empty intersection.
(c) Case 3: $\Pa_R(b' \rightarrow a)$ and $ab$ have a non-empty intersection.}
\label{fig:lool2}
\end{figure}

\noindent
{\bf Case 1:} $\Pa_R(b \rightarrow c)$ and $ac$ intersect.
Then by {\bf S3} there is a short path $p(a,b)$ between $a$ and $b$.

\medskip
\noindent
{\bf Case 2:} $\Pa_R(b \rightarrow c)$ and $ac$ do not intersect,
and $\Pa_R(b' \rightarrow a)$ and $ab$ do not intersect (see Figure~\ref{fig:lool2}b).
Note that because $b'$ is
the endpoint of the short path $\Pa_R(b \rightarrow c)$, the triangle inequality on $\triangle abb'$
implies that $ab'$ is short, and therefore
$\Pa_R(b' \rightarrow a)$ is short. We consider two cases:
\begin{enumerate}
\item[(i)] $\Pa_R(b' \rightarrow a)$ intersects $ac$.
Then by {\bf S3} there is a short path $p(a,b')$.
So
$$
p(a,b) = p(a,b') \oplus \Pa_R^{-1}(b \rightarrow c)
$$
is short.

\item[(ii)] $\Pa_R(b' \rightarrow a)$ does not intersect $ac$.
Then $\Pa_R(c \rightarrow b')$ must intersect
$\Pa_R(b \rightarrow c) \oplus \Pa_R(b' \rightarrow a)$.
Next we establish that $b'c$ is short. Let $\overrightarrow{eb'}$ be the last edge
of $\Pa_R(b \rightarrow c)$, and so incident to $b'$
(note that $e$ and $b$ may coincide). Because $\Pa_R(b \rightarrow c)$ does not intersect
$ac$, $b'$ and $c$ are in the same quadrant for $e$. It follows that
$|eb'| \le |ec|$ and $\angle{b'ec} < \pi/2$.
These along with
Proposition~\ref{fact:tri}
for $\triangle b'ec$ imply that
$|b'c|^2 < |b'e|^2 + |ec|^2 \le 2|ec|^2 < 2|bc|^2$ (this latter inequality
uses the fact that $\angle{bec} > \pi/2$, which implies that $|ec| < |bc|$).
It follows that
\begin{eqnarray}
 |b'c| & \le & |bc| \sqrt{2} ~\le ~2|ac| ~~~~~~~~~~~~~\mbox{(by~(\ref{eq:o1})ii)}
\label{eq:o4a}
\end{eqnarray}
Thus $b'c$ is short, and by {\bf S1} we have that $\Pa_R(c \rightarrow b')$ is short.
Since $\Pa_R(c \rightarrow b')$ intersects the short path
$\Pa_R(b \rightarrow c) \oplus \Pa_R(b' \rightarrow a)$,
there is by {\bf S3} a short path $p(c,b)$, and so
$$
p(a,b) = ac \oplus p(c,b)
$$
is short.

\end{enumerate}

\medskip
\noindent
{\bf Case 3:} $\Pa_R(b \rightarrow c)$ and $ac$ do not intersect,
and $\Pa_R(b' \rightarrow a)$ intersects $ab$ (see Figure~\ref{fig:lool2}c).
If $\Pa_R(b' \rightarrow a)$ intersects $ab$ at $a$, then
$p(a,b) = \Pa_R(b \rightarrow c) \oplus \Pa_R(b' \rightarrow a)$ is short.
So assume otherwise, in which case
there is an edge $\overrightarrow{de} \in \Pa_R(b' \rightarrow a)$ that
crosses $ab$.
Then $d \in Q_1(a)$, $e \in Q_3(a) \cup Q_4(a)$, and $e$ and $a$ are in
the same quadrant for $d$. Note however that $e$ cannot lie in
$Q_3(a)$, since in that case $\angle{dae} > \pi/2$, which would imply
$|de| > |da|$, which in turn would imply $\overrightarrow{de} \notin Y_4$.
So it must be that $e \in Q_4(a)$.

Next we show that $\Pa_R(e \rightarrow a)$ does not cross $ab$. Assume the opposite,
and let $\overrightarrow{rs} \in \Pa_R(e \rightarrow a)$ cross $ab$. Then
$r \in Q_4(a)$, $s \in Q_1(a) \cup Q_2(a)$, and $s$ and $a$ are in
the same quadrant for $r$. Arguments similar to the ones above show that
$s \notin Q_2(a)$, so $s$ must lie in $Q_1(a)$. Let $d$ be the $L_\infty$
distance from $a$ to $b$. Let $x$ be the projection of $r$ on the horizontal
line through $a$. Then
\[
|rs| \ge |rx| + d \ge |rx| + |xa| > |ra| ~~~~\mbox{(by the triangle inequality)}
\]
Because $a$ and $s$ are in the same quadrant for $r$, the inequality above
contradicts $\overrightarrow{rs} \in Y_4$.

We have established that $\Pa_R(e \rightarrow a)$ does not cross $ab$. Then
$\Pa_R(a \rightarrow e)$ must intersect $\Pa_R(e \rightarrow a) \oplus de$.
Note that $de$ is short because it is in the short path $\Pa_R(b' \rightarrow a)$.
Thus $ae$ is short, and so $\Pa_R(a \rightarrow e)$ and $\Pa_R(e \rightarrow a)$
are short.
Thus we have two intersecting short paths, and so by {\bf S3}
there is a short path $p(a,e)$.
Then
$$
p(a,b) = p(a,e) \oplus \Pa_R^{-1}(b' \rightarrow a) \oplus  \Pa_R^{-1}(b \rightarrow c)
$$
is short. Calculations deferred to the appendix show that, in each of these cases, the
stretch factor for $p(a,b)$ does not exceed $29+23\sqrt{2}$.
\end{proof}

\noindent
Our main result follows immediately from Theorem~\ref{thm:y4infspanner} and Lemma~\ref{lem:lool2short}:

\begin{theorem}
$Y_4$ is a $t$-spanner, for $t \ge 8(29+23\sqrt{2})$.
\end{theorem}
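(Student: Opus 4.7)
The plan is a direct concatenation of the two main ingredients already built up in the paper. Fix any pair of nodes $a, b \in V$. First I would invoke Theorem~\ref{thm:y4infspanner} to produce a path $\pi$ in $Y_4^\infty$ from $a$ to $b$ whose $L_\infty$ length satisfies $|\pi|_\infty \le 8\,|ab|_\infty$. Writing $\pi = u_0 u_1 \cdots u_k$ with $u_0 = a$ and $u_k = b$, each edge $u_i u_{i+1}$ is an edge of $Y_4^\infty$, so it falls under the hypothesis of the preceding lemma.

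Next, for each such edge I would apply Lemma~\ref{lem:lool2short} to replace $u_i u_{i+1}$ by a path $p(u_i, u_{i+1})$ lying entirely in $Y_4$ of Euclidean length at most $(29 + 23\sqrt{2})\,|u_i u_{i+1}|$. Concatenating these pieces yields a single path $P$ in $Y_4$ from $a$ to $b$ satisfying
\[
|P| \;\le\; (29+23\sqrt{2}) \sum_{i=0}^{k-1} |u_i u_{i+1}|.
\]
It remains to bound $\sum_i |u_i u_{i+1}|$ in terms of $|ab|$. I would combine the Theorem~\ref{thm:y4infspanner} bound $\sum_i |u_i u_{i+1}|_\infty = |\pi|_\infty \le 8|ab|_\infty$ with the standard comparison between the $L_\infty$ and $L_2$ norms on individual vectors, together with $|ab|_\infty \le |ab|$, to bound the right-hand side above by $8|ab|$, which yields the claimed stretch factor of $8(29+23\sqrt{2})$.

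There is no real obstacle in this final step: the theorem is essentially a bookkeeping result, since Theorem~\ref{thm:y4infspanner} does all the work on the $L_\infty$ side and Lemma~\ref{lem:lool2short} does all the work of transferring a single $Y_4^\infty$ edge into $Y_4$. The only subtle point, should one wish to optimize the constant, is the conversion between the $L_\infty$ length of $\pi$ and the sum of the Euclidean lengths of its edges, which must be done tightly enough that the stated constant $8(29+23\sqrt{2})$ is respected.
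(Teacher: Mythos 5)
Your overall plan is exactly the paper's: the paper's entire proof of this theorem is the single sentence that it ``follows immediately from Theorem~\ref{thm:y4infspanner} and Lemma~\ref{lem:lool2short}'', and your decomposition --- take the $Y_4^\infty$ path from $a$ to $b$, replace each of its edges by the $Y_4$ path supplied by Lemma~\ref{lem:lool2short}, and concatenate --- is the only sensible reading of that sentence. So structurally you match the paper.

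However, the one step you actually have to supply --- passing from $\sum_i |u_iu_{i+1}|_\infty \le 8|ab|_\infty$ to $\sum_i |u_iu_{i+1}| \le 8|ab|$ --- does not go through as you describe. The standard comparison is $|v|_\infty \le |v| \le \sqrt{2}\,|v|_\infty$, so bounding the Euclidean edge lengths from above by their $L_\infty$ lengths costs a factor $\sqrt{2}$ per edge, giving only $\sum_i |u_iu_{i+1}| \le \sqrt{2}\sum_i |u_iu_{i+1}|_\infty \le 8\sqrt{2}\,|ab|_\infty \le 8\sqrt{2}\,|ab|$. The inequality $|ab|_\infty \le |ab|$ that you invoke helps on the right-hand side but points the wrong way on the left, and the loss is real: for $a=(0,0)$, $b=(1,0)$ and the two-edge path through $(\tfrac12,\tfrac12)$, the Euclidean length is $\sqrt{2}$ times the $L_\infty$ length while $|ab|=|ab|_\infty$. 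So your argument as written yields a stretch factor of $8\sqrt{2}(29+23\sqrt{2})$, not $8(29+23\sqrt{2})$. To be fair, the paper's own one-line proof glosses over exactly the same conversion, so you have faithfully reproduced its argument including its soft spot --- and your closing remark shows you sensed that this is where the difficulty lies. To repair it you must either accept the extra $\sqrt{2}$ in the constant, or strengthen Theorem~\ref{thm:y4infspanner} so that it bounds the $L_1$ (hence $L_2$) length of the constructed $Y_4^\infty$ path rather than only its $L_\infty$ length; the explicit bald claim that the sum of Euclidean edge lengths is at most $8|ab|$ is the step that fails.
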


\section{Conclusion}
Our results settle a long-standing open problem, asking whether $Y_4$
is a spanner or not. We answer this question positively, and
establish a loose stretch factor of $8(29+23\sqrt{2})$.
Experimental results, however, indicate a stretch factor of the
order $1 + \sqrt{2}$, a factor of 200 smaller.
Finding tighter stretch factors for both $Y_4^\infty$ and $Y_4$
remain interesting open problems. Establishing
whether $Y_5$ and $Y_6$ are spanners or not is also open. 

\noindent

\def\cprime{$'$}

\section{Appendix}
\label{sec:appendix}
\subsection{Calculations for the stretch factor of $p(a,b)$ in Lemma~\ref{lem:lool2short}}
We start by computing the stretch factor of the short paths claimed by statements {\bf S2}
and {\bf S3}.

\begin{description}
\item[S2]
If $ab \in Y_4$ and $cd \in Y_4$ are short, and if
$ab$ intersects $cd$, 
then there is a short path $P$ between any two of the endpoints
of these edges, of length 
\begin{equation}
|P| \le |ab| + |cd| + 3(2+\sqrt{2})\max\{|ab|, |cd|\}
\label{eq:s2}
\end{equation}
This upper bound can be derived as follows. Let $xy$ be a shortest side of
the quadrilateral $acbd$. By Lemma~\ref{lem:recross}, $Y_4$
contains a path $p(x, y)$ no longer than $6(\sqrt{2}+1)|xy|$.
By Lemma~\ref{lem:quad}, $|xy| \le \max\{|ab|, |cd|\}/\sqrt{2}$.
These together with the fact that
$|P| \le |ab| +|cd| +|p(x,y)|$ yield inequality~(\ref{eq:s2}).
\item[S3]
If $p(a,b)$ and $p(c,d)$ are short paths that intersect,
then there is a short path $P$ between any two of the endpoints
of these paths, of length %
\begin{equation}
|P| \le |p(a,b)| + |p(c,d)| + 3(2+\sqrt{2})\max\{|ab|, |cd|\}
\label{eq:s3}
\end{equation}
This follows immediately from {\bf S2} and the fact that
no edge on $p(a,b) \cup p(c,d)$ is longer than $\max\{|ab|, |cd|\}$ (by Lemma~\ref{lem:recross}).
\end{description}

\medskip
\noindent
{\bf Case 1:} $\Pa_R(b \rightarrow c)$ and $ac$ intersect. Then by {\bf S3} we have

\begin{eqnarray*}
\nonumber |p(a,b)| & \le & |\Pa_R(b,c)| + |ac| + 3(2+\sqrt{2}) \max\{|bc|, |ac|\} \\
                   & \le & \sqrt{2}|bc| + |ac| + 3(2+\sqrt{2})\sqrt{2}|ac| ~~~~~~~~~~~~~~~~\mbox{(by~(\ref{eq:PR}),~(\ref{eq:o1})ii)} \\
                   & = & 3(3+2\sqrt{2})|ac| \le 3(3+2\sqrt{2})|ab| ~~~~~~~~~~~~~~~~\mbox{(by~({\ref{eq:o1}})i)} \\
\label{eq:o2}
\end{eqnarray*}

\medskip
\noindent
{\bf Case 2(i):}
$\Pa_R(b \rightarrow c)$ and $ac$ do not intersect;
$\Pa_R(b' \rightarrow a)$ and $ab$ do not intersect;
and $\Pa_R(b' \rightarrow a)$ intersects $ac$.
By {\bf S3}, there is a short path $p(a,b')$ of length
\begin{eqnarray}
\nonumber |p(a,b')| & \le & |\Pa_R(b',a)| + |ac| + 3(2+\sqrt{2}) \max\{|b'a|, |ac|\} \\
           & \le & |b'a|\sqrt{2} + |ac| + 3(2+\sqrt{2}) \max\{|b'a|, |ac|\} ~~~~~~~~~\mbox{(by~(\ref{eq:PR}))}
\label{eq:o3}
\end{eqnarray}
Next we establish an upper bound on $|b'a|$. By the triangle inequality,
\begin{eqnarray}
          |ab'| & < & |ac| + |cb'| ~\le ~3|ac| ~~~~~~~~\mbox{(by~(\ref{eq:o4a}))}
\label{eq:o4b}
\end{eqnarray}
Substituting this inequality in~(\ref{eq:o3}) yields
\begin{eqnarray}
|p(a,b')| & \le & (19+12\sqrt{2})|ac|
\label{eq:o5a}
\end{eqnarray}
Thus $p(a,b) = p(a,b') \oplus \Pa_R^{-1}(b \rightarrow c)$ is a path in $Y_4$ of length
\begin{eqnarray*}
        |p(a,b)| & \le & |p(a,b')| + |bc| \sqrt{2} ~~~~~~~~~~~~~\mbox{(by~(\ref{eq:PR}))}\\
                   & \le & |p(a,b')| + 2|ac|  ~~~~~~~~~~~~~~~\mbox{(by~(\ref{eq:o1})ii)} \\
                   & \le & (21+12\sqrt{2})|ac| ~~~~~~~~~~~~~~~\mbox{(by~(\ref{eq:o5a}))} \\
                   & \le & (21+12\sqrt{2})|ab| ~~~~~~~~~~~~~~~\mbox{(by~(\ref{eq:o1})i)} \\
\end{eqnarray*}

\medskip
\noindent
{\bf Case 2(ii):}
$\Pa_R(b \rightarrow c)$ and $ac$ do not intersect;
$\Pa_R(b' \rightarrow a)$ and $ab$ do not intersect;
and $\Pa_R(b' \rightarrow a)$ does not intersect $ac$.
Then $\Pa_R(c \rightarrow b')$ must intersect
$\Pa_R(b \rightarrow c) \oplus \Pa_R(b' \rightarrow a)$.
By {\bf S3} there is a short path $p(c,b)$ of length
\begin{eqnarray*}
\nonumber |p(c,b)| & \le & |\Pa_R(c \rightarrow b')| + |\Pa_R(b \rightarrow c)| + |\Pa_R(b' \rightarrow a)| +  3(2+\sqrt{2}) \max\{|cb'|, |bc|, |b'a|\} \\
                   & \le & (|cb'| + |bc| + |b'a|)\sqrt{2} + 3(2+\sqrt{2}) \max\{|cb'|, |bc|, |b'a|\} ~~~~~~~\mbox{(by~(\ref{eq:PR}))}
\label{eq:o6}
\end{eqnarray*}
Inequalities~(\ref{eq:o1})ii, ~(\ref{eq:o4a}) and~(\ref{eq:o4b}) imply that
$\max\{|cb'|, |bc|, |b'a|\} \le 3ac$. Substituting in the above,
we get
\begin{eqnarray*}
\nonumber |p(c,b)| & \le & (2+\sqrt{2}+3)\sqrt{2}|ac| +  9(2+\sqrt{2})|ac| \\
                   & \le & (20 + 14\sqrt{2}) |ac| ~~~~~~~~~~~~~~~\mbox{(by~(\ref{eq:o1})i)}
\label{eq:o7}
\end{eqnarray*}
Thus $p(a,b) = ac \oplus p(c,b)$ is a path in $Y_4$ from $a$ to $b$ of length
\begin{eqnarray*}
\nonumber |p(a,b)| & \le & (21 + 14\sqrt{2}) |ac|  ~\le ~(21 + 14\sqrt{2}) |ab|~~~~~~~~\mbox{(by~(\ref{eq:o1})i)} \\
\label{eq:o8}
\end{eqnarray*}

\medskip
\noindent
{\bf Case 3:} $\Pa_R(b \rightarrow c)$ and $ac$ do not intersect,
and $\Pa_R(b' \rightarrow a)$ intersects $ab$.
If $\Pa_R(b' \rightarrow a)$ intersects $ab$ at $a$, then
$p(a,b) = \Pa_R(b \rightarrow c) \oplus \Pa_R(b' \rightarrow a)$ is clearly short and
does not exceed the spanning ratio of the lemma.
Otherwise, there is an edge $\overrightarrow{de} \in \Pa_R(b' \rightarrow a)$
that crosses $ab$, and
$\Pa_R(a \rightarrow e)$ intersects $\Pa_R(e \rightarrow a) \oplus de$
(as established in the proof of Lemma~\ref{lem:lool2short}).
By {\bf S3} there is a short path $p(a,e)$ of length
\begin{eqnarray}
\nonumber |p(a,e)| & \le & |\Pa_R(a \rightarrow e)| + |\Pa_R(e \rightarrow a)| + |de| +  3(2+\sqrt{2}) \max\{|ae|, |de|\} \\
                   & \le & 2|ae|\sqrt{2} + |de| + 3(2+\sqrt{2}) \max\{|ae|, |de|\} ~~~~~~~~~~~~~\mbox{(by~(\ref{eq:PR}))}
\label{eq:o9}
\end{eqnarray}
A loose upper bound on $|ae|$ can be obtained by employing
Proposition~\ref{fact:quad}
to the quadrilateral
$aebd$: $|ae| + |bd| < |ab| + |de| < |ab| + |ab'|$. Substituting the upper bound for
$ab'$ from~(\ref{eq:o4b}) yields
\begin{equation}
|ae| < |ab| + 3|ac| \le 4|ab|
\label{eq:ae}
\end{equation}
By Lemma~\ref{lem:PR}, $|de| \le |ab'|$ (since $de \in \Pa_R(b' \rightarrow a)$), which
along with~(\ref{eq:o4b}) implies
\begin{equation}
|de| \le 3|ab|
\label{eq:de}
\end{equation}
Substituting~(\ref{eq:ae}) and~(\ref{eq:de}) in ~(\ref{eq:o9}) yields
\begin{eqnarray}
\nonumber |p(a,e)| & \le & (27 + 20\sqrt{2})|ab|
\label{eq:oA}
\end{eqnarray}
Then
$$
p(a,b) = p(a,e) \oplus \Pa_R^{-1}(b' \rightarrow a) \oplus  \Pa_R^{-1}(b \rightarrow c)
$$
is a path from $a$ to $b$ of length
\begin{eqnarray*}
\nonumber |p(a,b)| & \le & |p(a,e)| + |b'a|\sqrt{2} + |bc|\sqrt{2} ~~~~~~~~~~~~~~~~~\mbox{(by~(\ref{eq:PR}))} \\
                   & \le & (27 + 20\sqrt{2})|ab| + 3\sqrt{2}|ab| + 2|ab| ~~~~~~~~\mbox{(by~(\ref{eq:o4b}),~(\ref{eq:o1}))} \\
                   & = & (29 + 23\sqrt{2})|ab|
\end{eqnarray*}

\subsection{$Y_k$ is a Spanner, for $k \ge 7$}
\begin{lemma}
Let $\theta$ be a real number with $0 < \theta < \pi/3$, 
and let
\[ t = \frac{1 + \sqrt{2 - 2 \cos\theta}}{2 \cos \theta -1} .
\]
Let $a$, $b$, and $c$ be three distinct points in the plane such that
$|ac| \leq |ab|$, let $\alpha = \angle{bac}$, and assume that
$0 \leq \alpha \leq \theta$. Then
\begin{equation}  \label{eqtoprove}
   |bc| \leq |ab| - |ac|/t .
\end{equation}
\label{lem:60degrees}
\end{lemma}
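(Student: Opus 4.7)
The plan is to reduce the geometric inequality to an algebraic one via the Law of Cosines and then optimize over the allowed range of parameters. Normalize by setting $|ab| = 1$ and $x = |ac|$, so $0 < x \le 1$. The Law of Cosines gives $|bc|^2 = 1 + x^2 - 2x\cos\alpha$. I would first observe that for the bound~(\ref{eqtoprove}) to even make sense, we need $1 - x/t \ge 0$; this holds since a brief inspection of $t = \frac{1+\sqrt{2-2\cos\theta}}{2\cos\theta-1}$ shows that $t \ge 1$ whenever $\theta < \pi/3$ (the denominator is positive since $\cos\theta > 1/2$, and $1+\sqrt{2-2\cos\theta} \ge 2\cos\theta - 1$ is immediate). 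Then I can square both sides of $|bc| \le 1 - x/t$ and, after cancelling, reduce the desired inequality to
\[
  f(x,\alpha) \;:=\; 2\cos\alpha - \tfrac{2}{t} - x\bigl(1 - \tfrac{1}{t^2}\bigr) \;\ge\; 0.
\]

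Next, I would minimize $f$ over the region $(x,\alpha) \in (0,1] \times [0,\theta]$. Since $1 - 1/t^2 \ge 0$, the function $f$ is nonincreasing in $x$, so its infimum is attained at $x = 1$; since $\cos$ is decreasing on $[0,\theta] \subset [0,\pi/2]$, the infimum is attained at $\alpha = \theta$. Hence it suffices to prove $f(1,\theta) \ge 0$, i.e.
\[
   2\cos\theta - \tfrac{2}{t} - 1 + \tfrac{1}{t^2} \;\ge\; 0,
\]
which after multiplying by $t^2$ becomes the quadratic inequality
\[
   (2\cos\theta - 1)\,t^2 - 2t + 1 \;\ge\; 0.
\]
Since $2\cos\theta - 1 > 0$ (this is exactly where $\theta < \pi/3$ is needed), this quadratic in $t$ opens upward. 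Solving $(2\cos\theta - 1)t^2 - 2t + 1 = 0$ via the quadratic formula gives roots $t = \frac{1 \pm \sqrt{1 - (2\cos\theta - 1)}}{2\cos\theta - 1} = \frac{1 \pm \sqrt{2-2\cos\theta}}{2\cos\theta - 1}$; the larger root coincides precisely with the $t$ defined in the lemma. Therefore the quadratic is nonnegative at this $t$, completing the reduction and the proof.

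The main obstacle is really just algebraic bookkeeping: verifying that the reduction via squaring is valid (which comes down to checking $t \ge 1$, driven by $\theta < \pi/3$) and recognizing the value of $t$ in the statement as the positive root of the quadratic that controls the worst case. There is no geometric subtlety beyond applying the Law of Cosines, and the extremal configuration, $|ac| = |ab|$ with $\alpha = \theta$, is exactly the one where the bound is tight, which is a useful sanity check that the constant $t$ cannot be improved under these hypotheses.
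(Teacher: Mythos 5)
Your proof is correct and follows essentially the same route as the paper's: apply the Law of Cosines, justify squaring via $t \ge 1$, reduce to the linear-in-$|ac|$ and monotone-in-$\alpha$ inequality whose worst case is $|ac|=|ab|$, $\alpha=\theta$, and recognize the resulting quadratic $(2\cos\theta-1)t^2-2t+1\ge 0$ as holding with equality for the given $t$. The normalization $|ab|=1$ and the explicit extraction of $t$ as the larger root of the quadratic are only cosmetic differences from the paper's presentation.
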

\begin{proof}
Refer to Figure~\ref{fig:60degrees}. By the Law of Cosines, we have
\[ |bc|^2 = |ac|^2 + |ab|^2 - 2 |ac| \cdot |ab| \cos \alpha .
\]
%
\begin{figure}[htbp]
\centering
\includegraphics[width=0.26\linewidth]{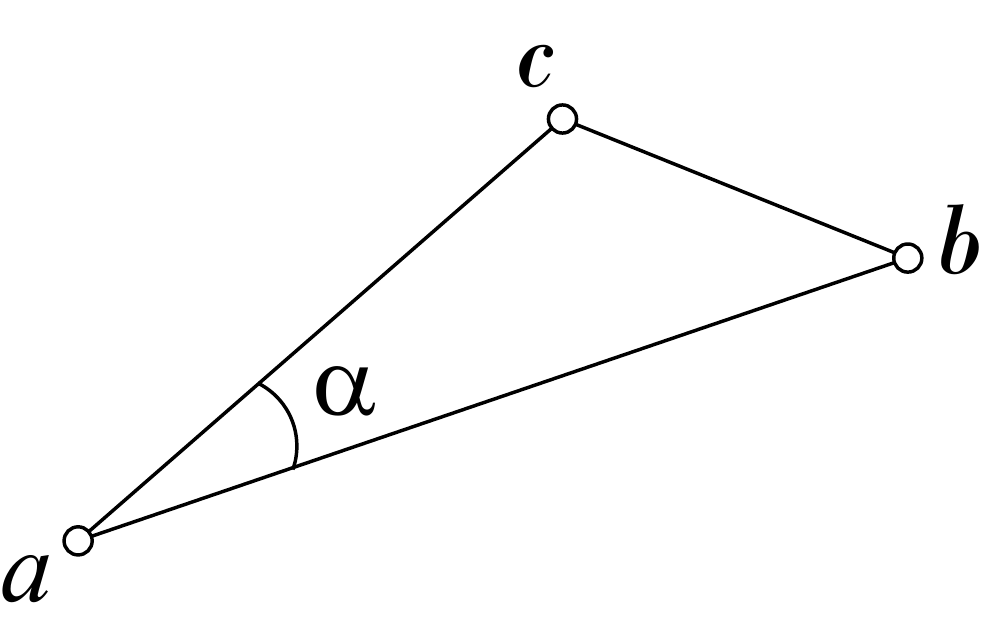}
\caption{Lemma 1: If $\alpha < 60$ and $|ac| \leq |ab|$, then $|bc| \leq |ab| - |ac|/t$.}
\label{fig:60degrees}
\end{figure}
%
Since $t>1$ and $|ac| \leq |ab|$, the right-hand side in
(\ref{eqtoprove}) is positive, so (\ref{eqtoprove}) is equivalent
to
\[ |bc|^2 \leq \left( |ab| - |ac|/t \right)^2 .
\]
Thus, we have to show that
\[ |ac|^2 + |ab|^2 - 2 |ac| \cdot |ab| \cos \alpha
      \leq \left( |ab| - |ac|/t \right)^2 ,
\]
which simplifies to
\begin{equation}     \label{eqtoshow}
    \left( 1 - 1/t^2 \right) |ac|
       \leq 2 (\cos \alpha - 1/t) |ab| .
\end{equation}
Since $|ac| \leq |ab|$ and $\cos \theta \leq \cos \alpha$,
(\ref{eqtoshow}) holds if
\[ 1 - 1/t^2 \leq 2 (\cos \theta - 1/t) ,
\]
which can be rewritten as
\begin{equation}     \label{eqtoshow3}
   ( 2 \cos \theta - 1 ) t^2 - 2t + 1 \geq 0 .
\end{equation}
By our choice of $t$, equality holds in (\ref{eqtoshow3}).
\end{proof}

\noindent
An immediate consequence of Lemma~\ref{lem:60degrees} is the
following result.
\begin{theorem}
For any $\theta$ with $0 < \theta < \pi/3$, the Yao-graph with cones of angle
$\theta$, is a $t$-spanner for
\[ t = \frac{1 + \sqrt{2 - 2 \cos\theta}}{2 \cos \theta -1} .
\]
\label{thm:Y6}
\end{theorem}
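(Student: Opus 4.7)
The plan is to reduce the theorem to Lemma~\ref{lem:60degrees} by a standard inductive argument on the ranks of pairwise distances in $V$. Fix any two nodes $p, q \in V$; I will show that $Y_\theta$ contains a $p$-to-$q$ path of length at most $t |pq|$.

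I would order the $\binom{|V|}{2}$ pairwise distances and induct on the rank of $|pq|$. For the base case, take $pq$ to be a closest pair in $V$. In the cone at $p$ that contains $q$, no node is strictly closer to $p$ than $q$, so the Yao graph must select an edge $\overrightarrow{pr}$ with $|pr| = |pq|$; but then $|rq| < |pq|$ (unless $r = q$, in which case we are done) would contradict minimality of $|pq|$. So either $r = q$ and the edge $pq$ itself lies in $Y_\theta$, giving a trivial path of length $|pq| \le t|pq|$.

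For the inductive step, consider the cone of aperture $\theta$ at $p$ that contains $q$. By the definition of the Yao graph, this cone contains some selected edge $\overrightarrow{pr} \in Y_\theta$ with $|pr| \le |pq|$, and the angle $\alpha = \angle qpr$ satisfies $0 \le \alpha \le \theta < \pi/3$. Now apply Lemma~\ref{lem:60degrees} with $a = p$, $b = q$, $c = r$ to obtain
\[
   |rq| \;\le\; |pq| - |pr|/t .
\]
Since $|pr| > 0$, this gives $|rq| < |pq|$, so the pair $(r,q)$ has strictly smaller rank and the inductive hypothesis supplies an $r$-to-$q$ path in $Y_\theta$ of length at most $t\,|rq|$.

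Concatenating the edge $pr$ with this path yields a $p$-to-$q$ path of total length at most
\[
   |pr| + t\,|rq| \;\le\; |pr| + t\bigl(|pq| - |pr|/t\bigr) \;=\; t\,|pq|,
\]
completing the induction. The only nontrivial ingredient is Lemma~\ref{lem:60degrees}, which is exactly engineered so that the inductive bookkeeping closes; the remaining obstacle is just the mild care needed in the base case (the constant-$t$ guarantee must hold even when induction cannot be invoked), which the closest-pair argument above handles cleanly.
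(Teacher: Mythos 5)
Your proposal is correct and follows exactly the paper's own argument: induction on the ranks of the $\binom{n}{2}$ pairwise distances, with the closest pair as the base case and Lemma~\ref{lem:60degrees} supplying the inequality $|rq| \le |pq| - |pr|/t$ that closes the inductive step. The paper only sketches this in three sentences; your write-up is the intended proof spelled out in full.
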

\begin{proof}
The proof of this claim is by induction on the distances defined by
the $n \choose 2$ pairs of nodes. Since $\theta < \pi/3$, any closest
pair is connected by an edge in the Yao-graph; this proves the
basis of the induction. The induction step follows from
Lemma~\ref{lem:60degrees}.
\end{proof}

What happens to the value of $t$ from Lemma~\ref{lem:60degrees},
if $\theta$ gets close to $\pi/3$:
Let $\varepsilon = \cos \theta - 1/2$, so that $\varepsilon$ is close to zero.
Then
\begin{eqnarray*}
 t & = & \frac{1}{2 \varepsilon} +
         \sqrt{ \frac{1 - 2 \varepsilon}{4 \varepsilon^2} } \\
   & = & \frac{1}{2 \varepsilon} +
         \frac{\sqrt{1 - 2 \varepsilon}}{2 \varepsilon}  \\
   & \sim & \frac{1}{2 \varepsilon} + \frac{1 - \varepsilon}{2 \varepsilon} \\
   & = & - \frac{1}{2} + \frac{1}{\varepsilon} \\
   & = & - \frac{1}{2} + \frac{1}{\cos \theta - 1/2} .
\end{eqnarray*}

\end{document}